\declaretheorem[name=Theorem]{theorem}
\declaretheorem[name=Lemma,numberwithin=section]{lemma}
\newcommand{\ceil}[1]{\left\lceil #1\right\rceil}
\newcommand{\lift}[1]{\widehat{#1}}
\newcommand{\RE}{\mathbb{R}}
\newcommand{\eps}{\varepsilon}
\newcommand{\ST}{\,:\,}
\newcommand{\SP}{\kern+1pt}
\newcommand{\Transpose}{\intercal}
\newcommand{\bd}{\partial}
\newcommand{\etal}{\textit{et al.}}
\DeclareMathOperator{\diam}{diam}
\DeclareMathOperator{\dist}{dist}
\DeclareMathOperator{\ray}{ray}
\DeclareMathOperator{\interior}{int}
\newcommand{\supp}{\mathrm{supp}}
\newcommand{\rep}{\mathrm{rep}}
\newcommand{\depth}[1]{\Delta_{#1}}
\newcommand{\Gradient}{\nabla}
\newcommand{\Hess}{\Gradient^2}
\newcommand{\dm}{d}
\newcommand{\adm}{\widetilde{d}}
\newcommand{\Q}{\mathcal{Q}}
\newcommand{\DQ}{\widetilde{d}_\Q}
\newcommand{\bdOmega}{\partial \kern+1pt \Omega}
\title{Smooth Distance Approximation\thanks{A condensed version of this paper appeared in the 31st Annu.\ European Sympos.\ Algorithms, 2023.}} 
\author{Ahmed Abdelkader\thanks{Conducted in part while the first author was a postdoctoral fellow at the University of Texas at Austin and completed before he joined Google LLC.} \\
	Google LLC, Mountain View, California, USA \\
	\href{mailto:ahmadabdolkader@gmail.com}{ahmadabdolkader@gmail.com} \\
	\and
	David M. Mount\\
	Dept.\ of Computer Science and \\
	Inst.\ for Advanced Computer Studies \\
	University of Maryland, College Park, Maryland USA \\
	\href{mailto:mount@umd.edu}{mount@umd.edu}
}
\date{}
\begin{document}

\maketitle

\begin{abstract}
Traditional problems in computational geometry involve aspects that are both discrete and continuous. One such example is nearest-neighbor searching, where the input is discrete, but the result depends on distances, which vary continuously. In many real-world applications of geometric data structures, it is assumed that query results are continuous, free of jump discontinuities. This is at odds with many modern data structures in computational geometry, which employ approximations to achieve efficiency, but these approximations often suffer from discontinuities. 

In this paper, we present a general method for transforming an approximate but discontinuous data structure into one that produces a smooth approximation, while matching the asymptotic space efficiencies of the original. We achieve this by adapting an approach called the partition-of-unity method, which smoothly blends multiple local approximations into a single smooth global approximation.

We illustrate the use of this technique in a specific application of approximating the distance to the boundary of a convex polytope in $\RE^d$ from any point in its interior. We begin by developing a novel data structure that efficiently computes an absolute $\eps$-approximation to this query in time $O(\log (1/\eps))$ using $O(1/\eps^{d/2})$ storage space. Then, we proceed to apply the proposed partition-of-unity blending to guarantee the smoothness of the approximate distance field, establishing optimal asymptotic bounds on the norms of its gradient and Hessian.
\end{abstract}

\textbf{Keywords:} Approximation algorithms, convexity, continuity, partition of unity.

\section{Introduction} \label{sec:intro}

The field of computational geometry has largely focused on computational problems with discrete inputs and outputs. Discrete structures are often used to represent geometric objects that are naturally continuous. Examples include using triangulated meshes to represent smooth surfaces, Voronoi diagrams to represent distance maps, and various spatial partitions for answering ray-shooting queries. Due to the high computational complexities involved, researchers often turn to approximation algorithms. Unfortunately, in retrieval problems, efficient approximation is often achieved at the expense of continuity. 

To make this more precise, consider the common example of distance functions. For a given set $S \subseteq \RE^d$ (which may be discrete or continuous), a natural \emph{distance map} over $\RE^d$ arises as:
\[
    \dm_{S}: x \mapsto \inf_{p \in S} \|x - p\|,
\]
where $\|\cdot\|$ denotes the Euclidean norm. In turn, the distance map gives rise to the following query problem. Given a query point $x \in \RE^d$, the objective is to compute $\dm_{S}(x)$ efficiently from a data structure of low storage.

It is well known that answering the distance query can be reduced to computing the Voronoi diagram of $S$. Unfortunately, beyond special low-dimensional cases, the combinatorial complexity of the Voronoi diagram grows too fast for practical use. For this reason, much work has focused on data structures for approximate nearest neighbor (ANN) searching~\cite{AMNSW98, AFM17a, GIM99, Har01, HIM12}. Given any $\eps > 0$, an $\eps$-ANN data structure returns a point that is within a factor of $1+\eps$ of the true closest distance.

While approximate nearest-neighbor searching is clearly related to approximating the distance map, there are fundamental differences between the two problems. The distance map induced by any set is clearly continuous (and indeed it is 1-Lipschitz continuous~\cite{BaS99}). As two query points converge on a common location, their respective distances to $S$ must also converge. The same cannot be said for any of the existing approaches based on approximate nearest neighbor searching. The ANN distances reported for two query points can differ by an amount that is arbitrarily larger than the distance between the two query points. In Section~\ref{sec:witness}, we will show that this is not merely an artifact of the design of these data structures; it is unavoidable.

Answering distance queries efficiently is key to many applications including motion planning~\cite{yershova2007improving}, surface reconstruction~\cite{Amenta1999,HKM16}, physical modeling~\cite{Osher2003}, and data analysis~\cite{Aurenhammer:2013:VDD, CompTopoBook}. Discontinuities can result in various sorts of aberrant behaviors. This is because queries are generated adaptively in a feedback loop, where answers to earlier queries are used to determine subsequent queries. Consider, for example, a navigation system that is trying to precisely dock two crafts moving in space. Discontinuities in the distance map can alter the behavior of the feedback process, resulting in jittering, oscillations, and even infinite looping (see examples in Section~\ref{sec:witness}). 

This motivates the main question considered in this paper: \emph{Does there exist a data structure that answers distance queries approximately so that the induced distance function is continuous?} Ideally, the distance function should also be smooth, characterized by bounds on the norm of its gradient and Hessian. Note that this is quite different from approximate nearest-neighbor searching, where the objective is to find a point that approximates the closest distance. Here, the objective is approximate the distance itself.

Applications of distance queries include collision detection~\cite{Brochu:2012:EGE}, penetration depth~\cite{ZHANG20143}, robot navigation~\cite{tiwari2022touch, lopez2013gap}, shape matching~\cite{al2013continuous}, and density estimation~\cite{pmlr-v206-marchetti23a}. Often, the set $S$ arises as a discrete point set obtained by sampling an underlying surface. Implicit representations of surfaces~\cite{bloomenthal1997introduction}, based on approximating the induced distance map, have recently witnessed significant developments based on deep neural networks~\cite{park2019deepsdf, gropp2020implicit, chibane2020neural}, where the properties of learned distance fields are yet to be fully understood~\cite{pmlr-v139-lipman21a,spelunking_TOG22}.

In this paper we present a general approach for smooth approximation from traditional non-continuous data structures. This is achieved through a process called \emph{blending}, where discrete local approximations are combined to form a smooth function. Our method is loosely based on the \emph{partition-of-unity method} (see, e.g., Melenk and Babu{\v s}ka \cite{Melenk1996}). The approach involves constructing an open cover of the domain by overlapping patches, computing a local approximation within each patch, and then blending these approximations together by associating a smooth weighting function with each patch (see Section~\ref{sec:pou} for details).

Unfortunately, a direct adaptation of these methods does not yield an efficient solution. To the best of our knowledge, existing work on partition-of-unity methods for distance approximation have not considered the asymptotic efficiency of the resulting access structures. These works have typically involved blending over relatively simple spatial decompositions, such as grids \cite{Stein1970} and balanced quadtrees \cite{Ohtake:2003}. The covering elements employed in the blending were naturally fat, that is, \emph{isotropic}. These subdivisions are particularly suitable for blending, but they lack the flexibility needed to achieve the highest levels of efficiency. Moreover, we are not aware of prior results on the asymptotic interplay between approximation and smoothness. (We refer the interested reader to recent works in the finite element literature on anisotropic~\cite{ZANDER2022103700} and high-dimensional~\cite{KOPP2022115575} refinements.) In this paper, we adopt the partition-of-unity approach to perform smooth blending for distance maps while achieving asymptotic complexity bounds that match the best existing approximation algorithms. Our results will be presented in Section~\ref{sec:medial-dist}.

\subsection{On Discontinuities and Witnesses} \label{sec:witness}

To better understand how discontinuities arise, it is useful to understand the general structure of most data structures for answering distance queries. Space is subdivided into regions, or \emph{cells}. This is either done explicitly by defining the subdivision over the query range or implicitly by viewing the data structure abstractly as a decision tree and associating each leaf of the tree with the subset of query points that land in this leaf due to the search process. Queries are answered by determining the cell (or cells) that are relevant to the answer, and accessing distance information for each cell. When the query point moves from one cell to another, even infinitesimally, different distance information is accessed, and the computed distance may change discontinuously.

For example, consider four point sites $P = \{p_1, p_2, p_3, p_4\}$ in $\RE^2$. Suppose that we construct an $\eps$-ANN data structure based on a subdivision into rectangular cells (see Figure~\ref{fig:witness}(a)). We assume that each cell stores a single site of $P$, called a \emph{representative}, that serves as an $\eps$-ANN for every query point lying in this cell, and assume further that the representatives have been chosen as shown in the figure, with $q_i$'s representative being $p_i$. Suppose that a gradient descent algorithm is run using this structure. Starting from an initial position (e.g., $q_i$), the descent takes a step towards the cell's representative ($p_i$). If the representatives and step sizes are chosen as in the figure, the descent could loop infinitely.

\begin{figure}[htbp]
   \centering\includegraphics[scale=0.40,page=1]{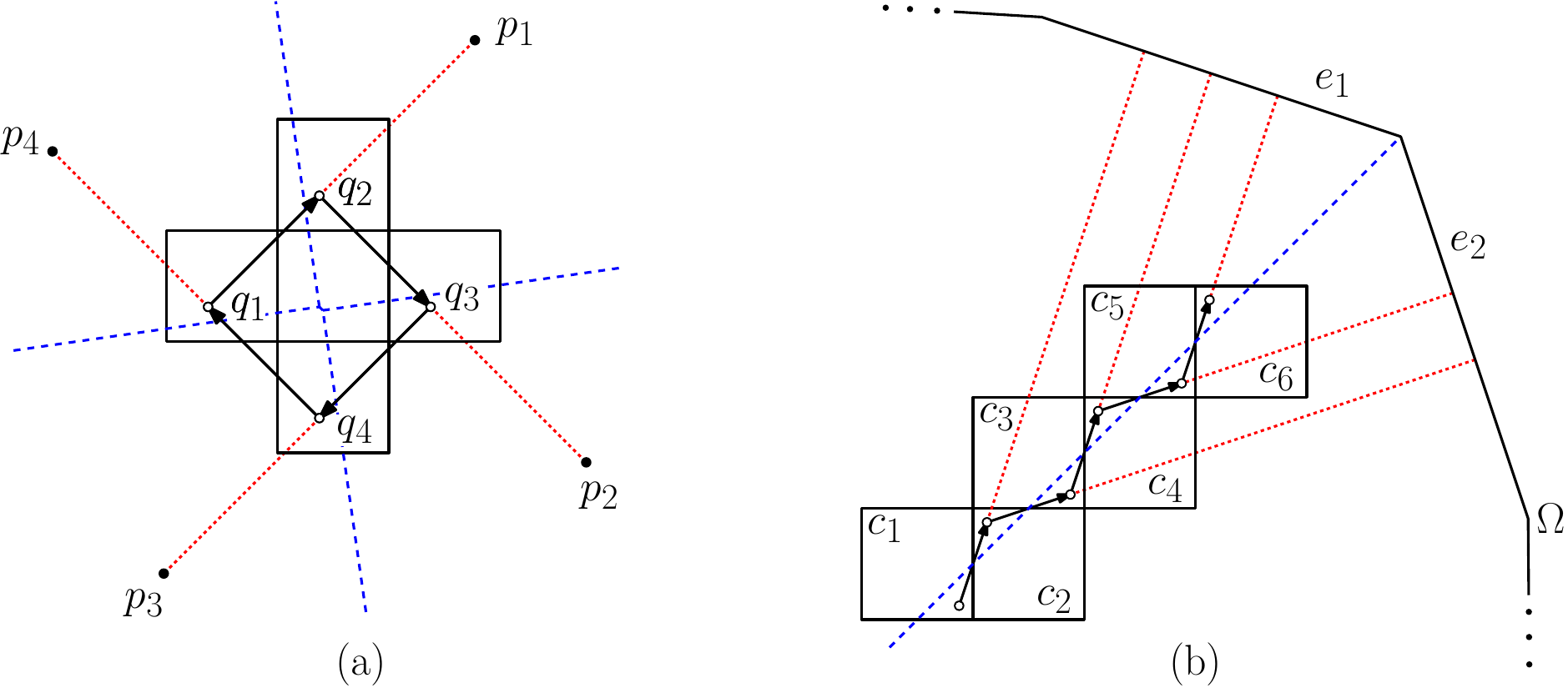}
   \caption{Problems with witness-based distance approximation: (a) infinite loops and (b) jittering. (The dashed blue lines bound the Voronoi cells of the sites, and the dotted red lines indicate the direction to the closest site.)}
   \label{fig:witness}
\end{figure}

In Figure~\ref{fig:witness}(b), we consider another distance function computed with respect to the boundary of a convex object $\Omega$. Cells $c_1$, $c_3$, and $c_5$ are assigned edge $e_1$ as representative, and cells $c_2$, $c_4$, and $c_6$ are assigned $e_2$. If at each point we walk towards the closest edge to the cell's centroid, the path oscillates or ``jitters'' between the two contenders.

In both of these examples, we assume a standard model in which each cell stores a \emph{witness} to an approximate nearest neighbor, and the distance function returns the distance from the query point to this witness. Let $\Q$ be a function that maps query points to witnesses (presumably based on the cell containing the query point), and let $\DQ$ denote the induced distance function $\DQ(x) = \|x - \Q(x)\|$. Such an approach is said to be \emph{witness-based}. The following lemma shows that any witness-based method that fails to be exact cannot be both continuous and accurate with respect to relative errors.

\begin{lemma} \label{lem:impossibility}
If a witness-based distance function $\DQ$ for a finite point set $P \subset \RE^d$ is inexact at even one point, it cannot be both continuous and provide a finite bound on relative errors.
\end{lemma}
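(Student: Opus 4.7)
My plan is to prove the contrapositive: if the witness-based distance function $\DQ$ is continuous and satisfies a relative-error bound $\DQ(x) \leq (1+c) \SP \dm_P(x)$ for all $x$, then $\DQ \equiv \dm_P$ on all of $\RE^d$, contradicting inexactness at $x_0$. The strategy is to first pin down witnesses near each site of $P$, then extend to the interior of every Voronoi cell by a one-dimensional path argument, and finally invoke density plus continuity to cover all remaining points.

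I would start with two observations at the sites. Since $\dm_P(p) = 0$ for every $p \in P$, the relative-error bound forces $\DQ(p) = 0$ and hence $\Q(p) = p$. Using continuity of $\DQ$ at $p$, together with the fact that any alternative witness $q \neq p$ would keep $\|x - q\| \geq \|p - q\| - \|x - p\|$ bounded away from zero for $x$ near $p$, I obtain a ball $B(p, \delta_p)$ on which $\Q \equiv p$.

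The central step is to show that $\Q(x) = p^*$ for every $x$ in the open interior of the Voronoi cell $V_{p^*}$. Since the interior of a convex set is itself convex and contains both $x$ and $p^*$, the segment $\gamma(t) = (1-t)x + t\SP p^*$ lies entirely in $\interior(V_{p^*})$, so along $\gamma$ the site $p^*$ is the \emph{strict} nearest neighbor. By the site-level step, $\Q \circ \gamma \equiv p^*$ on some right-neighborhood of $t = 1$, so the infimum $t^* := \inf\{t \in [0,1] : \Q(\gamma(s)) = p^* \text{ for all } s \in [t, 1]\}$ lies in $[0, 1)$, and $t^* > 0$ precisely when $\Q(x) \neq p^*$. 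Assuming the latter, finiteness of $P$ together with pigeonhole produces a sequence $s_n \to t^{*-}$ on which $\Q(\gamma(s_n))$ equals a fixed $p \neq p^*$; continuity of $\DQ$ at $\gamma(t^*)$ then forces $\|\gamma(t^*) - p\| = \|\gamma(t^*) - p^*\|$, contradicting the strict Voronoi inequality at the interior point $\gamma(t^*)$.

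Once exactness holds on every Voronoi cell interior, the union of these interiors is open and dense in $\RE^d$ (its complement sits inside the finite union of perpendicular bisector hyperplanes), so continuity of both $\DQ$ and $\dm_P$ upgrades the equality to all of $\RE^d$. The main obstacle I anticipate is the transition-point analysis at $\gamma(t^*)$: one has to handle both the case $\Q(\gamma(t^*)) = p^*$ and the case $\Q(\gamma(t^*)) \neq p^*$, and verify in each case that continuity of $\DQ$ really does pin down the bisector condition for the extracted constant-witness subsequence. Everything else amounts to bookkeeping around strict versus weak Voronoi inequalities.
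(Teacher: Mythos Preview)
Your argument is correct and proves the lemma by a genuinely different route from the paper. The paper gives a short quantitative argument: starting from a point $x$ with $\DQ(x) > \dm_P(x) + \delta$, it walks along the segment toward the true nearest neighbor $p$ and uses the key observation that, because $\DQ(y) = \|y - \Q(y)\|$ is always a Euclidean distance to \emph{some} fixed witness, its directional derivative along the segment is at least $-1$ (with no jumps at witness changes, by continuity). Hence after traveling distance $t$ the value of $\DQ$ has dropped by at most $t$; taking $t = \dm_P(x) - \delta/c$ leaves $\DQ$ above $(1+c)\delta/c$ while $\dm_P$ has shrunk to $\delta/c$, violating the relative-error bound $c$.

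Your approach instead exploits the finiteness of the witness set directly: near each site continuity pins the witness to that site, and an infimum argument propagates this along segments through Voronoi interiors, with any witness change forcing a bisector equality at an interior point. This actually establishes the stronger statement that the \emph{only} continuous witness-based function with any finite relative-error bound is the exact one, and it uses the error bound solely at the sites (to force $\DQ(p)=0$), never the $1$-Lipschitz-along-rays property that drives the paper's proof. The paper's argument is shorter; yours is more structural and yields more. One small wrinkle: the biconditional ``$t^* > 0$ precisely when $\Q(x) \neq p^*$'' is not literally true (either direction can fail), but your contradiction only needs the hypothesis $\Q(x)\neq p^*$, and the boundary case $t^*=0$ is dispatched by the same continuity argument applied at $x$ itself---precisely the case split you already flag as the anticipated obstacle.
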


\begin{proof}
Suppose towards a contradiction that $\DQ$ is continuous, guarantees a relative error of at most $c$ for some $c > 0$, but there exists a point $x \in \RE^d$ such that $\DQ(x) > \dm_P(x)$. In particular, we may select an arbitrarily small $\delta > 0$ such that $\DQ(x) > \dm_P(x) + \delta$. Let $p \in P$ denote a nearest neighbor of $x$ and consider how the value $\DQ$ varies as we walk from $x$ to $p$ along the line segment $\overline{x p}$. More precisely, letting $u$ be a unit vector directed from $x$ to $p$, define $x(t) = x + t \cdot u$, and $\DQ(t) = \DQ(x(t))$. Except at a finite number of transition points where the witness changes, the derivative of $\DQ(t)$ with respect to $t$ cannot be smaller than $-1$. (A derivative of $-1$ occurs when we are walking straight towards the current witness, and otherwise it is strictly larger.) Since the function is continuous, its value does not change at transition points. It follows that as we travel a distance of $t \leq \dm_P(x)$ from $x$ to $p$, the value returned by $\DQ$ cannot decrease by an amount more than $t$. Setting $t = \dm_P(x) - \delta/c$, we conclude that
\[
    \DQ(x(t))
        ~ \geq ~ \DQ(x) - t
        ~ >    ~ (\dm_P(x) + \delta) - \left( \dm_P(x) - \frac{\delta}{c} \right)
        ~ =    ~ \frac{(1+c) \delta}{c}.
\]
But, $\dm_P(x(t)) = \dm_P(x) - t = \delta/c$, implying that the relative error is 
\[
    \frac{\DQ(x(t)) - \dm_P(x(t))}{\dm_P(x(t))}
        ~ = ~ \frac{\DQ(x(t))}{\dm_P(x(t))} - 1
        ~ > ~ \frac{(1+c) \delta}{c} \cdot \frac{c}{\delta} - 1
        ~ = ~ (c + 1) - 1
        ~ = ~ c,
\]
a contradiction.
\end{proof}

\subsection{Main Result} \label{sec:medial-dist}

For the sake of concreteness, we will illustrate our approach to producing smooth approximate distance functions in a specific application which is fairly simple, but still new. Let $\Omega$ denote a convex polytope in $\RE^d$, and let $\diam(\Omega)$ denote its diameter and $\bdOmega$ its boundary. We further assume that $\Omega$ is represented as the intersection of $n$ halfspaces. Given a point $x \in \Omega$, we define the \emph{boundary distance function} $\dm_{\bdOmega}(x)$ as the Euclidean distance to $x$'s closest point on $\bdOmega$. To simplify notation, we will refer to this as $\dm_{\Omega}(x)$ (see Figure~\ref{fig:media-axis}(a)). Our objective is to efficiently evaluate an $\eps$-approximation $\adm_{\Omega}$ for any given query $x \in \Omega$, while guaranteeing smoothness (i.e., continuity and norm bounds on the gradient and Hessian).

\begin{figure}[htbp]
   \centering\includegraphics[scale=0.40]{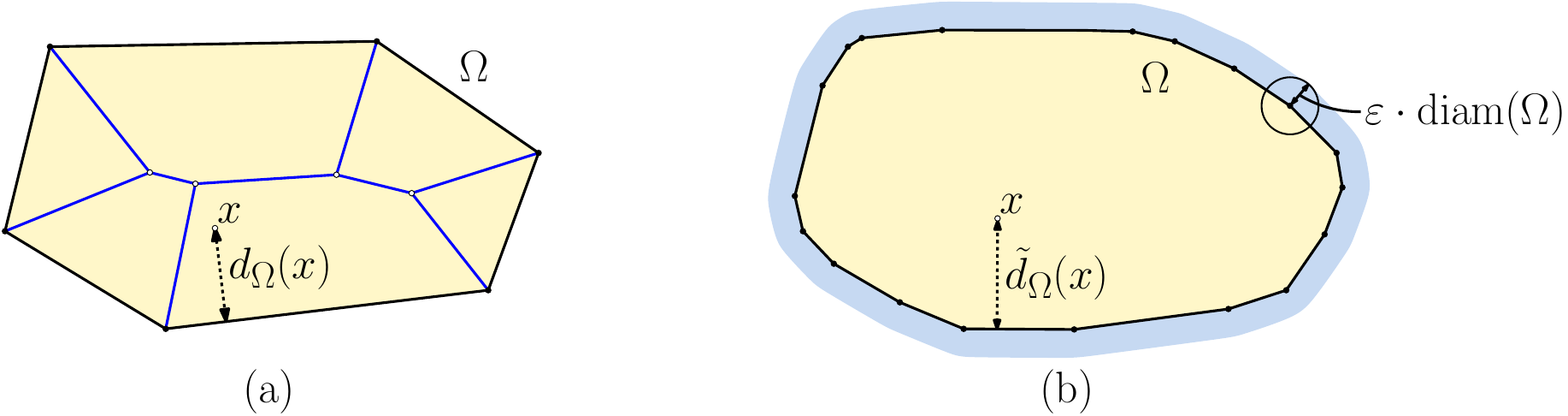}
   \caption{(a) The medial axis of $\Omega$ and the boundary distance function $d_{\Omega}$ and (b) approximating the boundary distance in terms of absolute errors with parameter $\eps > 0$.}
   \label{fig:media-axis}
\end{figure}

By convexity, if $x$ lies in $\Omega$'s interior, $\interior(\Omega)$, its closest point on the boundary lies on one of $\Omega$'s facets, that is, its faces of dimension $d-1$. Thus, in the exact setting, the distance map is determined by the Voronoi diagram of $\Omega$'s facets. The skeleton of this Voronoi diagram is known as the \emph{medial axis} or \emph{medial diagram} of $\Omega$\cite{EE98,CKM99,TDS16}. While the combinatorial complexity of the medial axis is $O(n)$ in $\RE^2$, it grows much faster in higher dimensions. It is not hard to show that medial axis corresponds to the lower-envelope of $n$ hyperplanes in $\RE^{d+1}$, with a combinatorial complexity of $\Theta(n^{\ceil{d/2}})$ in the worst case \cite{McM70}.

The obvious discrete analog to our problem is approximate polytope membership, where the data structure merely indicates whether the query point lies inside or outside the polytope, up to a Hausdorff error of $\eps \cdot \diam(\Omega)$ (see Figure~\ref{fig:media-axis}(b)). In recent work, it was shown that this problem can be solved in query time $O(\log (1/\eps))$ from a data structure using $O(1/\eps^{(d-1)/2})$ of space~\cite{AFM17b, AbM18}.

In this paper, we show how to apply the partition-of-unity method to evaluate an absolute $\eps$-approximate boundary distance function $\widetilde{d}_{\Omega}$ for a convex polytope $\Omega$ in a manner that guarantees smoothness while nearly matching the query times achieved in approximate membership queries. Specifically, we require that $|\widetilde{d}_{\Omega}(x) - \dm_{\Omega}(x)| \leq \eps \cdot \diam(\Omega)$, for all $x \in \interior(\Omega)$. Throughout we treat $\eps$ as an asymptotic quantity, and assume the dimension $d$ is a constant. Our main result is:

\begin{theorem}\label{thm:main}
Given a convex polytope $\Omega$ and an approximation parameter $\eps > 0$, there exists a smooth function $\adm_{\Omega}$ satisfying $\dm_{\Omega}(x) \leq \adm_{\Omega}(x) \leq  \dm_{\Omega}(x) + \eps \cdot \diam(\Omega)$ for all $x \in \Omega$, which can be evaluated along with its gradient from a data structure with
\[
    \text{Query time} ~ = ~ O(\log (1/\eps)) \qquad\text{and} \qquad 
    \text{Storage} ~ = ~ O(1/\eps^{d/2}).
\]
Further, the norms of the gradient and Hessian of $\adm_{\Omega}$ satisfy
\[
    \big\|\Gradient \adm_{\Omega}(x)\big\| 
        ~ = ~ O(1)
        \qquad \text{and} \qquad
    \big\|\Hess \adm_{\Omega}(x)\big\| 
        ~ = ~ O\bigg(\frac{1}{\eps}\bigg).
\]
\end{theorem}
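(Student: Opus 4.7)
The plan is to overlay a partition-of-unity on a cell-based discontinuous approximation. First, I would invoke the data structure developed earlier in the paper: a decomposition of $\Omega$ into $O(1/\eps^{d/2})$ cells with $O(\log(1/\eps))$ point location, where each cell $c$ stores a representative facet $F_c$ of $\bdOmega$, inducing an affine function $f_c$ (essentially the distance to the supporting hyperplane of $F_c$, shifted upward by a term of order $\eps\diam(\Omega)$ to guarantee $f_c \geq \dm_\Omega$) that approximates $\dm_\Omega$ to within absolute error $O(\eps \diam(\Omega))$ on $c$. This is the discontinuous baseline.

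\textbf{Smoothing.}
To smooth it, I would inflate each cell $c$ into a patch $\widehat{c}$ slightly larger than $c$ so that the family $\{\widehat{c}\}$ forms a bounded-overlap open cover of $\Omega$. On each patch I place a smooth bump function $w_c$ supported in $\widehat{c}$, normalize so that $\sum_c w_c \equiv 1$, and define
\[
    \adm_\Omega(x) \;=\; \sum_c w_c(x)\, f_c(x).
\]
Because each $f_c$ satisfies $\dm_\Omega(x) \leq f_c(x) \leq \dm_\Omega(x) + O(\eps \diam(\Omega))$ throughout $\widehat{c}$, the convex combination inherits the same two-sided bound (absorbing the constant into $\eps$). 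For a query, locating $x$'s cell takes $O(\log(1/\eps))$ and then $O(1)$ active $(w_c, f_c)$ pairs are summed, matching the claimed query time; storage is dominated by the baseline decomposition.

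\textbf{Derivative bounds.}
The bounds on $\Gradient \adm_\Omega$ and $\Hess \adm_\Omega$ exploit the identities $\sum_c \Gradient w_c \equiv 0$ and $\sum_c \Hess w_c \equiv 0$, which follow from $\sum_c w_c \equiv 1$. For any reference affine function $f_*$,
\[
    \Gradient \adm_\Omega \;=\; \sum_c (\Gradient w_c)(f_c - f_*) \;+\; \sum_c w_c\, \Gradient f_c,
\]
and, since the $f_c$ are affine,
\[
    \Hess \adm_\Omega \;=\; \sum_c (\Hess w_c)(f_c - f_*) \;+\; 2 \sum_c (\Gradient w_c)(\Gradient f_c - \Gradient f_*)^\Transpose.
\]
Choosing $f_*$ to coincide with one of the active $f_c$ at $x$, the pointwise differences obey $|f_c - f_*| = O(\eps \diam(\Omega))$ and $\|\Gradient f_c - \Gradient f_*\| = O(1)$ (unit normals to supporting hyperplanes). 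Combined with the bump bounds $\|\Gradient w_c\| = O(1/\eps)$ and $\|\Hess w_c\| = O(1/\eps^2)$ (since the bumps transition over patches of width $\Theta(\eps)$ in normalized units), this yields $\|\Gradient \adm_\Omega\| = O(1)$ and $\|\Hess \adm_\Omega\| = O(1/\eps)$, as required.

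\textbf{Main obstacle.}
The telescoping calculation above is essentially mechanical. The real difficulty lies in arranging the underlying cell decomposition so that a POU with these derivative properties actually exists at a cost of $O(1/\eps^{d/2})$ cells with bounded overlap. As the introduction notes, the AVD-style decompositions that beat the trivial $O(1/\eps^{d-1})$ storage bound are strongly anisotropic, whereas prior POU constructions have been restricted to isotropic (grid or quadtree) covers precisely because anisotropic cells can spoil the $1/\eps^k$ scaling of bump derivatives. Overcoming this requires either engineering a decomposition whose cells are uniformly fat in every direction that controls derivatives, or designing anisotropic bump functions whose gradient and Hessian norms are calibrated to the shortest cell width, so that the $1/\eps^k$ factors remain tight. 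Reconciling this interplay between efficient decomposition and smooth blending is where the real content of the proof lives.
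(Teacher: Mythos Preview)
Your proposal is correct and tracks the paper's argument closely: the paper constructs exactly the partition-of-unity blend you describe, with patches that are vertical projections of anisotropic Macbeath ellipsoids, each containing a ball of radius $\Theta(\eps)$ (Lemma~\ref{lem:ball-containment}) and with bounded overlap (Lemma~\ref{lem:PU_depth}); the bumps are ellipsoid-adapted, $\psi_i(x)=\mu\big((x-c_i)^\intercal M_i(x-c_i)\big)$, which yields $\|\Gradient\phi_i\|=O(1/\eps)$ precisely because the shortest semi-axis governs the steepest direction. Your ``main obstacle'' paragraph correctly anticipates this resolution. One minor point: no additive shift of the local functions is needed, since for $x\in\Omega$ the signed distance to any supporting hyperplane already dominates $\dm_\Omega(x)$.

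Where you diverge is in the Hessian bound. The paper does not invoke $\sum_c \Hess w_c\equiv 0$; instead it writes out $\Gradient\adm_\Omega$ explicitly (Lemma~\ref{lem:sDF_gradient}), packages the blending contribution into an auxiliary scalar $F_i(x)$, and separately bounds $|F_i|=O(\eps)$ and $\|\Gradient F_i\|=O(1)$ (Lemma~\ref{lem:F_bounds}) before assembling the directional second derivative. Your route---subtracting a reference affine $f_*$ in both the $\Hess w_c$ and $\Gradient w_c$ sums---is more elementary and transparent, and goes through with the same ingredients ($\|\Hess\phi_i\|=O(1/\eps^2)$ follows from the same ellipsoid geometry). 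Both arrive at $O(1/\eps)$; yours just gets there with less bookkeeping.
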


Observe that this is almost as good as the best query and space times for approximate polytope membership~\cite{AFM17b, AbM18}, suffering just an additional factor $1/\sqrt{\eps}$ in the space bound. Our data structure can be viewed as incorporating blending into the data structure of~\cite{AbM18}. While we assume that the query point lies within $\Omega$, if this is not the case and $x$ is at distance at least $\eps \cdot \diam(\Omega)$ outside, the data structure will report this. If $x$ is external to $\Omega$ but is closer than this to the boundary, it may erroneously report an answer to the query. Our focus is on the existence of the data structure, but through the use of known constructions, it can be built in time $O(n/\eps^{O(d)})$, where $n$ denotes the number of facets of the polytope.

Let us remark on the bounds on the norms of the gradient and Hessian. Clearly, in any Euclidean distance field the directional derivative of the distance field is as high as 1 (when moving directly towards or away from the nearest point) and is never greater, that is, $\|\Gradient \dm_{\Omega}(x)\| \leq 1$. Therefore, it is reasonable that the norm of our approximate function, $\|\Gradient \adm_{\Omega}(x)\|$, is $O(1)$. The following lemma shows that the $O(1/\eps)$ upper bound on the norm of the Hessian is a necessity, up to constant factors. It establishes a lower bound in the context of a relative errors for approximating the distance to a discrete point set, but the result can be adapted to our context as well.

\begin{restatable}{lemma}{HessianLB}\label{lem:hessian-lower-bound}
Fix a set of points $P \subset \RE^d$, and let $Q$ be a smooth $\eps$-approximate distance query structure over $P$ with the associated distance $\DQ$, for any $\eps > 0$ bounding the relative error. If $|P| > 1$, then there exists a point $x \in \RE^d$ such that $\big\|\Hess \DQ(x)\big\| \geq 1/\eps$.
\end{restatable}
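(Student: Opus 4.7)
The strategy is to exploit the fact that $\dm_P$ has a gradient discontinuity along the perpendicular bisector of any pair of Voronoi-adjacent sites: along the segment joining two such sites $\dm_P$ looks like a tent function, which cannot be smoothed within relative error $\eps$ without forcing a second derivative of order $1/\eps$. Concretely, I would let $p_1, p_2 \in P$ be any closest pair, set $\ell = \|p_1 - p_2\|$, $u = (p_2 - p_1)/\ell$, and $m = (p_1 + p_2)/2$, and define $x(t) = m + t\,u$ for $t \in [-\ell/2,\ell/2]$. Because every other $p \in P$ satisfies $\|p - p_1\|,\|p - p_2\| \geq \ell$, the triangle inequality gives $\|p - x(t)\| \geq \ell/2 + |t|$, so $\dm_P(x(t))$ is realized by $p_1$ or $p_2$ and equals $\ell/2 - |t|$. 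Setting $f(t) = \DQ(x(t))$, the relative-error hypothesis yields
\[
    (1-\eps)\,(\ell/2 - |t|) ~ \leq ~ f(t) ~ \leq ~ (1+\eps)\,(\ell/2 - |t|).
\]

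\textbf{Finite-difference argument and conclusion.} For any $a \in (0,\ell/2]$ and any $C^2$ function, the Peano-kernel identity gives
\[
    f(a) - 2 f(0) + f(-a) ~ = ~ \int_{-a}^{a} (a - |s|)\, f''(s)\, ds,
\]
so $|f(a) - 2 f(0) + f(-a)| \leq a^2 \cdot \max_{|s| \leq a} |f''(s)|$. Combining the two-sided bounds on $f$ from the previous step yields $f(a) - 2 f(0) + f(-a) \leq -2 a (1+\eps) + 2 \eps \ell$. Choosing $a = 2 \eps \ell$ (taking $\eps$ small enough that $a \leq \ell/2$), the right-hand side is at most $-2 \eps \ell (1 + O(\eps))$, whence $\max_{|s| \leq a} |f''(s)| \geq 2\eps\ell/(2\eps\ell)^2 = 1/(2\eps\ell) = \Omega(1/\eps)$ after absorbing the intrinsic scale $\ell$ of $P$ into the constant. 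Since $f''(s) = u^{\Transpose}\, \Hess \DQ(x(s))\, u$ for the unit vector $u$, we have $|f''(s)| \leq \|\Hess \DQ(x(s))\|$, and the required lower bound on the Hessian norm follows at the maximizing point $x = x(s^{*})$.

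\textbf{Main obstacle.} The crux is the choice $a = \Theta(\eps \ell)$: if $a$ is too small the $\eps$-slack in the three sampled values $f(0), f(\pm a)$ swamps the intrinsic descent of the tent function and the centered second difference is compatible with $f'' \equiv 0$, while if $a$ is too large the denominator $a^2$ wipes out the bound; balancing the two competing error terms is exactly what produces the $1/\eps$ rate. Beyond this optimization, the only geometric subtlety is the closest-pair choice of $p_1, p_2$, which is what makes the triangle-inequality argument pin $\dm_P$ to the tent along $[p_1, p_2]$; the remaining gap between $1/(2\eps\ell)$ and the stated $1/\eps$ is absorbed either by a normalization of $P$'s scale or by interpreting the lower bound up to constants that depend on $P$.
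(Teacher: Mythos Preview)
Your argument is correct and takes a genuinely different route from the paper. The paper works directly with the Lipschitz constant of $\nabla\DQ$: it fixes two sites at distance $2(2+\eps)$, picks points $a,b$ on the joining segment at distance $\eps$ on either side of the bisector, argues that $\nabla\DQ$ at $a$ and $b$ must be opposite unit vectors (since on each half-segment the only admissible witness is the nearer site), and reads off $\|\nabla\DQ(a)-\nabla\DQ(b)\|/\|a-b\|=2/(2\eps)=1/\eps$. You instead stay entirely at the level of function values, sandwiching the one-dimensional restriction between two scaled tents and extracting a large $f''$ from the centered second difference via the Peano kernel. Your route has the virtue of using only the two-sided bound on $\DQ$ and never needing to pin down the gradient at a point---the paper's assertion that $\nabla\DQ(a)$ is exactly a unit vector tacitly presumes a witness-based structure rather than an arbitrary smooth $\eps$-approximation. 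Conversely, the paper gets the clean constant $1/\eps$ by \emph{choosing} the inter-site distance, whereas your bound $1/(2\eps\ell)$ carries the closest-pair distance $\ell$ of the fixed set $P$; you correctly flag this, and in fact the paper sidesteps the very same scale issue by silently selecting the configuration, so both proofs should really be read as establishing $\Omega(1/\eps)$ with an implicit constant depending on the geometry of $P$.
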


\begin{proof}
\newcommand{\GradientF}{\Gradient \kern-1pt f}
Given a function $f: \RE^d \to \RE$ and $\gamma \geq 0$, the assertion $\|\GradientF(a) - \GradientF(b)\| \leq \gamma$ is equivalent to saying that $\GradientF$ is $\gamma$-Lipschitz, that is, $\|\GradientF(a) - \GradientF(b)\| \leq \gamma \cdot \|a - b\|$, for all $a, b \in \RE^d$. Letting $f := \Gradient \DQ$, we will show that $f$ is $\gamma$-Lipschitz with $\gamma \geq 1/\eps$. 

\begin{figure}[htbp]
  \centerline{\includegraphics[scale=0.40]{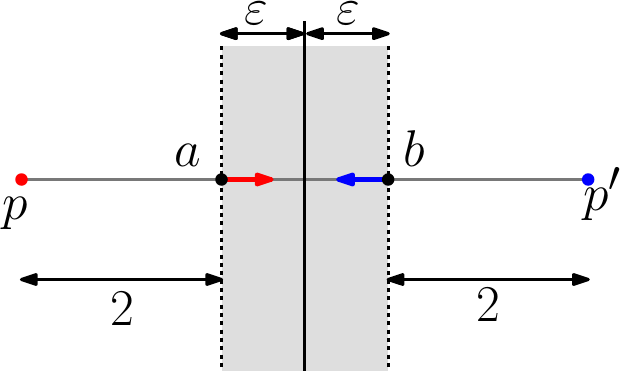}}
  \caption{Proof of Lemma~\ref{lem:hessian-lower-bound}.} \label{fig:hessian-lower-bound}
\end{figure}

Consider two sites $p$ and $p'$ such that $\|p p'\| = 2(2+\eps)$ (see Figure~\ref{fig:hessian-lower-bound}). Select points $a$ and $b$ along the segment $p p'$ on opposite sides and at distance $\eps$ from the perpendicular bisector. Observe that a query point placed at any point on the open segment $p a$ must return $p$ as the answer, since otherwise the relative error would exceed $((2+2\eps)-2)/2 = \eps$. This holds symmetrically for $p' b$. It follows that $\GradientF(a)$ and $\GradientF(b)$ are unit vectors pointing to the right and left, respectively. Hence, $\|\GradientF(a) - \GradientF(b)\|/\|a - b\| = 2/2\eps = 1/\eps$, as desired.
\end{proof}

The remainder of the paper is organized as follows. In the next section we present an overview of the partition-of-unity approach. In Section~\ref{sec:medial-apx} we present an efficient data structure for answering approximate distance queries for a convex polytope $\Omega$, but without continuity. Finally, in Section~\ref{sec:together}, we combine these to obtain the desired smooth approximation.

\section{Blending and Partition of Unity} \label{sec:pou}

The partition of unity is a standard mathematical tool for integrating local constructions into global ones~\cite{Lee2003, Stein1970}. It is widely used and has applications in various disciplines~\cite{Ohtake:2003,Melenk1996}. The approach involves a collection of \emph{patches} $\Pi = \{\Pi_i\}$ forming a locally-finite open cover of a given domain $\Omega \subseteq \RE^d$. The partition of unity is a set of non-negative smooth \emph{partition functions} $\{\phi_i\}$ such that the support of $\phi_i$, denoted $\supp(\phi_i)$, is a subset of $\Pi_i$ (see Figure~\ref{fig:cover}(a)). The name derives from the requirement that for all $x \in \Omega$, $\sum_i \phi_i(x) = 1$.

\begin{figure}[htbp]
    \centering\includegraphics[scale=0.40]{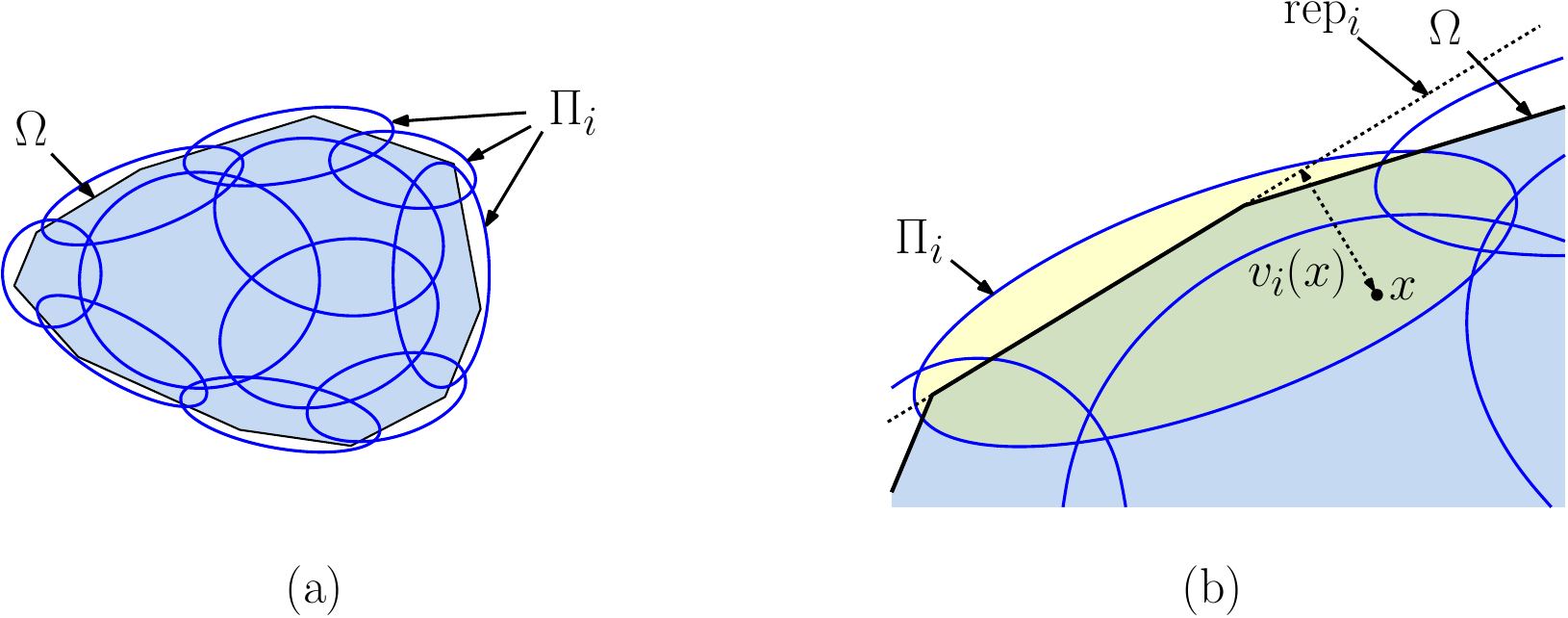}
    \caption{Patches, representatives, and the partition of unity.}
    \label{fig:cover}
\end{figure}

In the context of distance approximation, let us assume that each patch is associated with a \emph{local distance function} $v_i$, such that the restriction of $v_i$ to $\Pi_i$ is an $\eps$-approximation to the true distance function $d_{\Omega}$. Concretely, each patch is associated with a \emph{representative}, denoted $\rep_i$. For example, when approximating the distance to a discrete point set $P$, $\rep_i$ may be a point $p \in P$. In our case, where $\Omega$ is a convex polytope, $\rep_i$ will be chosen to be a supporting hyperplane of a facet of $\Omega$ (see Figure~\ref{fig:cover}(b)). Then, $v_i(x)$ can be defined to be distance from $x$ to the associated representative,
\begin{equation}\label{eq:v_i}
    v_i(x) ~ = ~ \dist(x, \rep_i).
\end{equation}
The final approximate distance map results by taking the sum of these local distance functions over all patches weighted by the associated partition functions. 
\begin{equation}\label{eq:pu-total}
    \adm_{\Omega}(x) 
        ~ = ~ \sum_i \phi_i(x) \cdot v_i(x).
\end{equation}
Recall that the support of $\phi_i$ is limited to $\Pi_i$, so we need only compute the sum over patches containing $x$. Define the \emph{depth} of $x$ with respect to $\Pi$, denoted $\depth{\Pi}(x)$, to be the number of patches of $\Pi$ containing $x$, and define $\depth{\Pi} = \max_x \depth{\Pi}(x)$. As in standard applications of the partition-of-unity method, we will design our patches so that $\depth{\Pi}$ is $O(1)$.

In order to enforce the condition that the functions $\phi_i$ sum to unity at any point in the domain, we will define a set of smooth, non-negative \emph{weight functions} $\{\psi_i\}$, and then define
\begin{equation}\label{eq:pu-weight}
    \phi_i(x) 
        ~ = ~ \frac{\psi_i(x)}{\Psi(x)}, \text{ where } \Psi(x) = \sum_i \psi_i(x).
\end{equation}

Observe that since $\adm_{\Omega}(x)$ is a convex linear combination of functions, each of which is locally an $\eps$-approximate distance map for $\Omega$, it follows that $\adm_{\Omega}(x)$ is itself an $\eps$-approximate distance map. Our construction will guarantee that there exists a positive constant $\Psi_{\min}$, such that $\Psi(x) > \Psi_{\min}$, for all $x \in \Omega$ (see Lemma~\ref{lem:positive_Psi} in Section~\ref{sec:pu-math}). It follows that $\phi_i(x)$ can be made as smooth as desired, being the quotient of two positive continuous functions. Assuming that the local distance approximations $\{v_i\}$ are smooth, it follows that $\adm_{\Omega}$ is itself smooth, being a sum of products of pairs of continuous functions. As a 1-dimensional example, see Figure~\ref{fig:blending}.

\begin{figure}[htbp]
    \centering\includegraphics[scale=0.40]{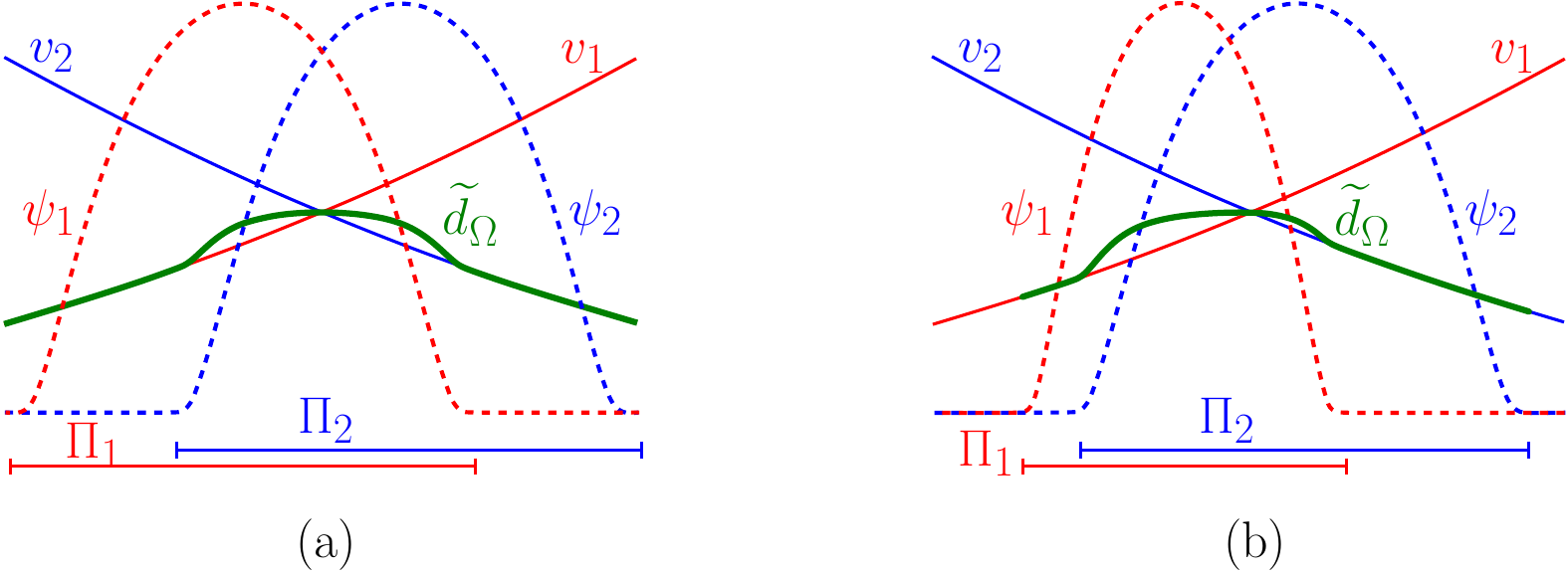}
    \caption{Blending two distance functions $\{v_i\}$ using two overlapping intervals $\{\Pi_i\}$ with associated weight functions $\{\psi_i\}$, yielding a smooth approximation $\adm_{\Omega}$ using (a) symmetric covers and (b) non-symmetric covers.}
    \label{fig:blending}
\end{figure}

It remains to define the weight function $\psi_i$ associated with each patch. These functions depend on the patch's shape. For our application, patches will be ellipsoids, but for this introduction, let us consider the simple case of a Euclidean ball with center point $c_i$ and radius $r_i$. First, for $x \in \RE^d$, define
\[
    f_i(x) 
        ~ = ~ \frac{1}{r_i^2}\|x - c_i\|^2.
\]
Observe that $f$ achieves its minimum value of $0$ at the ball's center and grows to $1$ at its boundary. To obtain a compactly-supported weight function, we use the standard technique of composing $f$ with a bump function, also known as the \emph{standard mollifier}~\cite{Showalter2011}
\begin{equation} \label{eq:bump}
    \mu(\sigma) 
        ~ = ~ \begin{cases}
                \exp\left(- \dfrac{1}{1 - \sigma^2}\right)    & \text{if $|\sigma| < 1$},\\
                0                                           & \text{otherwise}.
            \end{cases}
\end{equation}
Since $\mu(0) = e^{-1}$ and $\mu(1) = 0$, we see that the weight is highest near the middle of the shape, where $f = 0$, and decays gracefully towards the boundary, where $f = 1$. It is well-known that $\mu \in C_c^\infty(\RE)$ and is non-analytic with vanishing derivatives for $|\sigma| = 1$~\cite{Showalter2011}. Therefore, we may define $\psi_i(x) = \mu(f_i(x))$. 

In summary, given any query point $x$, we first determine the patches that contain it. (The number of which, $\depth{\Pi}(x)$, will be bounded by a constant.) Given the shape functions $f_i$ for each of these patches, we compute the weight functions $\psi_i$'s by applying the mollifier of Eq.~\eqref{eq:bump}. We then apply Eq.~\eqref{eq:pu-weight} to obtain the partition-of-unity blending functions. Finally, we apply Eqs.~\eqref{eq:v_i} and \eqref{eq:pu-total} to obtain the final smooth distance approximation. The overall space and query time are dominated by the total number of patches and the time needed to determine which patches contain the query point, respectively.

\section{Approximating the Boundary Distance Function} \label{sec:medial-apx}

The process described in the previous section is generic and can be applied in settings where the answer to the query can be expressed in terms of a covering of space by regions of low combinatorial complexity. For the sake of illustration, let us now explore how this works in the specific application of computing a smooth absolute $\eps$-approximation to the boundary distance in a convex polytope $\Omega$ in $\RE^d$. Let us assume that $\Omega$ has been scaled uniformly to unit diameter, so the absolute approximation error is $\eps$. 

We will employ a standard method for reducing distance approximation to a covering problem. First, let's consider the \emph{graph} of the boundary distance function $\dm_{\Omega}$, that is, the manifold $(x, \dm_{\Omega}(x))$ in $\RE^{d+1}$. We will use $z$ to denote coordinate values along the $(d+1)$st coordinate axis, which we will take to be directed vertically upwards in our drawings (see Figure~\ref{fig:medial-lift}(b) and (c)).

\begin{figure}[htbp]
    \centering\includegraphics[scale=0.40]{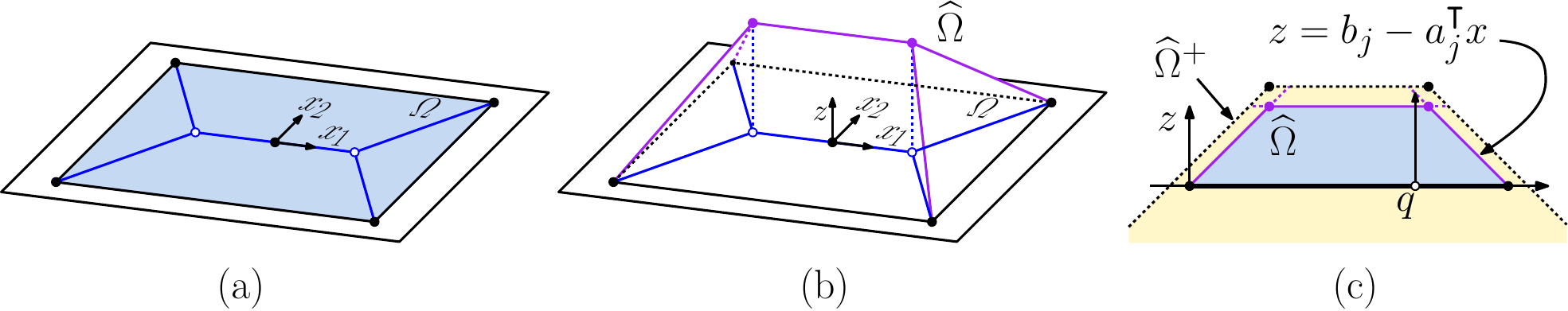}
    \caption{Lifting the polytope $\Omega \subseteq \RE^d$ to the lifted body $\lift{\Omega} \subseteq \RE^{d+1}$.}
    \label{fig:medial-lift}
\end{figure}

Assuming that the polytope $\Omega$ contains the origin in its interior, it can be represented as the intersection of a set of $n$ halfspaces in $\RE^d$, $\Omega = \bigcap_{j=1}^n H_j$, with each $H_j$ taking the form
\[
    H_j 
        ~ = ~ \{x \in \RE^d \ST a_j^{\Transpose} x \leq b_j \},
\]
where $a_j \in \RE^d$ is an outward-pointing unit normal vector orthogonal to $H_j$'s bounding hyperplane and $b_j \in \RE^+$ is the distance of the bounding hyperplane from the origin. The distance of a point $x \in \Omega$ to the bounding hyperplane is the non-negative scalar $z$ such that $x + z a_j$ lies on the bounding hyperplane, that is $z = b_j - a_j^{\Transpose} x$. The set of points lying below this surface (that is, the hypograph of the distance function) is the halfspace in $\RE^{d+1}$ given by the linear inequality $z \leq b_j - a_j^{\Transpose} x$. The boundary distance function is just the lower envelope (or minimization diagram~\cite{HaK15}) of this set of halfspaces. To turn this into a bounded convex polytope, we add a horizontal ground-surface halfspace $\lift{H}_0 = \{(x; z) : z \geq 0\}$. Define the \emph{lifted body} $\lift{\Omega} \subset \RE^{d+1}$ to be
\[
    \lift{\Omega}
        ~ = ~ \lift{H}_0 \cap \bigcap_{j=1}^n \lift{H}_j,
        ~~~\text{where $\lift{H}_j = \{ (x; z) \in \RE^{d+1} \ST z \leq b_j - a_j^{\Transpose} x\}$} \text{ for } j \in [n].
\]
Since the sides have a slope of $+1$, the diameter of $\lift{\Omega}$ is $O(1)$.

To achieve an absolute approximation error of at most $\eps$, we lift each of the upper halfspaces of $\lift{\Omega}$ by a vertical distance of $+\eps$ to obtain the resulting expanded object $\lift{\Omega}^+$. That is, we define $\lift{H}^{\delta}_j = \{ (x; z) \in \RE^{d+1} \ST z \leq b_j - a_j^{\Transpose} x + \delta \}$ and $\lift{\Omega}^+ = \bigcap_{j=1}^n \lift{H}^{\eps}_j$ (see Figure~\ref{fig:medial-lift}(c)). Note that the ground-surface halfspace ($\lift{H}_0$) is not needed for $\lift{\Omega}^+$, and hence it is unbounded. The essential features of lifting and expansion are summarized in the following lemma.

\begin{lemma} \label{lem:lift-summary}
Given a convex polytope $\Omega$ of unit diameter and any $x \in \Omega$, if a vertical ray is shot upwards from $x$ (viewed as a point in $\RE^{d+1}$) hits a bounding hyperplane of $\lift{\Omega}$ within $\lift{\Omega}^+$, then the associated facet of $\Omega$ is an absolute $\eps$-approximate nearest neighbor of $x$.
\end{lemma}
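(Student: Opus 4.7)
The approach will be to translate the lifted geometry back into ordinary Euclidean distances. The key observation is that along the vertical ray above any $x \in \Omega$, the height at which the ray exits the halfspace $\lift{H}_j$ equals the perpendicular distance from $x$ to the supporting hyperplane of facet $F_j$. Once this identification is made, the condition that the hit lies in $\lift{\Omega}^+$ yields an upper bound, while membership of $\Omega$ in the intersection of all $H_k$'s yields a matching lower bound, sandwiching the height between $\dm_{\Omega}(x)$ and $\dm_{\Omega}(x) + \eps$.

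First I would record two envelope identities that follow immediately from the defining intersections: the upper boundary of $\lift{\Omega}$ above the point $x$ sits at height $\min_j (b_j - a_j^{\Transpose} x) = \dm_{\Omega}(x)$, and the upper boundary of $\lift{\Omega}^+$ sits at height $\min_j (b_j - a_j^{\Transpose} x + \eps) = \dm_{\Omega}(x) + \eps$. Next, I would take an index $j$ such that the vertical ray above $x$ meets the bounding hyperplane of $\lift{H}_j$, necessarily at the height $z_j := b_j - a_j^{\Transpose} x$, and observe that, because $a_j$ is a unit vector, $z_j$ is precisely the perpendicular distance from $x$ to the supporting hyperplane of the facet $F_j$ associated with $\lift{H}_j$. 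Since $z_j$ is one of the values whose minimum defines $\dm_{\Omega}(x)$, we automatically have $z_j \geq \dm_{\Omega}(x)$.

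For the matching upper bound I would invoke the hypothesis that $(x, z_j) \in \lift{\Omega}^+$. Unwinding the definition of $\lift{\Omega}^+$, this means $z_j \leq b_k - a_k^{\Transpose} x + \eps$ for every $k$, so taking the minimum over $k$ on the right yields $z_j \leq \dm_{\Omega}(x) + \eps$. Combining gives $\dm_{\Omega}(x) \leq z_j \leq \dm_{\Omega}(x) + \eps$, and since $\diam(\Omega) = 1$ the facet $F_j$ is therefore an absolute $\eps$-approximate nearest neighbor of $x$ in the witness sense of Section~\ref{sec:witness}, with local distance $v_j(x) = z_j$. The argument is essentially bookkeeping; the only conceptual step that needs a moment's care is the identification of $\bd \lift{\Omega}$ with the graph of $\dm_{\Omega}$, which is exactly the claim that the minimum over hyperplane-distances coincides with the boundary distance for points interior to a convex polytope, as asserted in the paragraph preceding the lemma.
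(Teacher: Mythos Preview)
Your argument is correct and is exactly the unpacking of definitions that the paper intends: the paper states this lemma without proof, presenting it as a summary of the lifting construction in the preceding paragraphs, and your proof simply makes explicit the identification of vertical heights with hyperplane distances together with the sandwich $\dm_{\Omega}(x) \le z_j \le \dm_{\Omega}(x)+\eps$ that the construction was designed to guarantee.
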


The upshot is that we can base the local distance functions $v_i(x)$ (recall Eq.~\eqref{eq:v_i}) on the distance to the bounding hyperplane of $\Omega$ corresponding to the bounding hyperplane in the lifted body $\lift{\Omega}$ that is hit by the vertical ray shot upwards from the query point $x$. An important feature of $\lift{\Omega}^+$, which will be of later use (in Lemma~\ref{lem:ball-containment}), is that the distance between its boundary and that of $\lift{\Omega}$ is at least $c \kern+1pt \eps \cdot \diam(\lift{\Omega})$, for some constant $c$.

\subsection{Macbeath Regions and Ellipsoids} \label{sec:macbeath}

Our approach to approximating $\lift{\Omega}$ for the purpose of answering distance map queries will be based on generating a net-like covering of $\lift{\Omega}$ based on objects called \emph{Macbeath regions}. Macbeath regions and their variants have been widely used in convex approximation (see, e.g., \cite{AFM17a,AFM17c,AbM18,AAFM22,AFM23}). In contrast to traditional covers based on subdivisions by fat objects, e.g., hypercubes, Macbeath regions naturally adapt to the shape of the object being covered. In this section we present a brief review of the salient features of Macbeath regions.

Given a convex body $\Omega$ and any point $x \in \Omega$, the \emph{Macbeath region} at $x$ is the largest centrally-symmetric body centered at $x$ and contained within $\Omega$. It is common to apply a constant scaling factor. Formally, for $\lambda \in \RE^+$, the \emph{$\lambda$-scaled Macbeath region} at $x$ is
\[
    M_\Omega^{\lambda}(x) 
        ~ = ~ x + \lambda ((\Omega-x) \cap (x-\Omega))
\]
(see Figure~\ref{macbeath.fig}(a)). When $\Omega$ is clear from context, we will often omit the subscript. We refer to $x$ and $\lambda$ as the \emph{center} and \emph{scaling factor} of $M^{\lambda}(x)$, respectively. When $\lambda < 1$, we say $M^{\lambda}(x)$ is \emph{shrunken}.

\begin{figure}[htbp]
  \centerline{\includegraphics[scale=0.40]{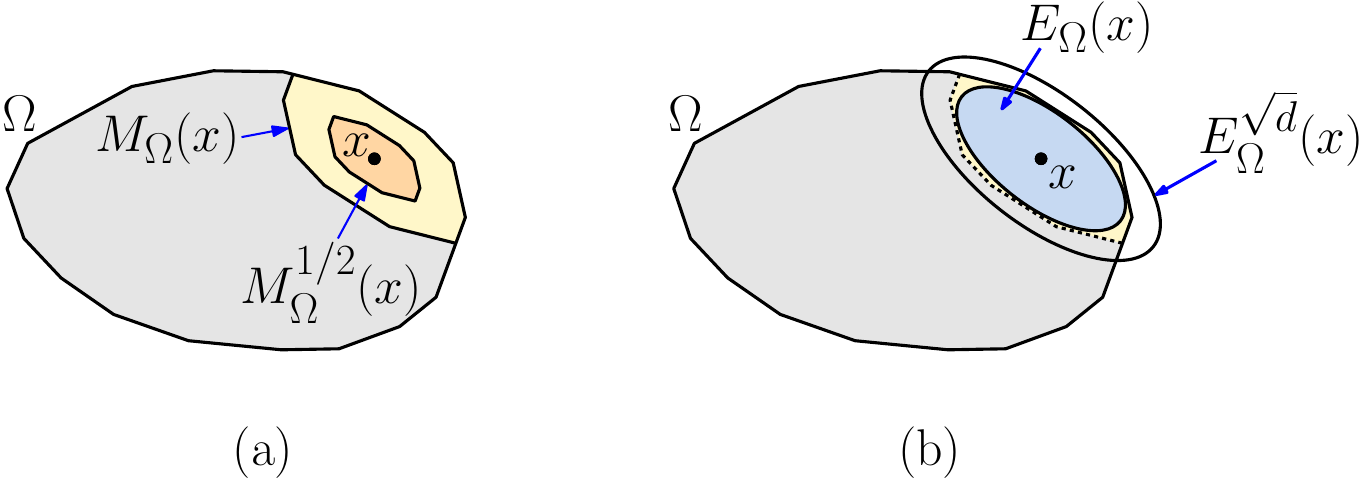}}
  \caption{(a) Macbeath regions and (b) Macbeath ellipsoids.} \label{macbeath.fig}
\end{figure}

It is useful to have a low-complexity, smooth proxy for a Macbeath region. Given a Macbeath region, define its associated \emph{Macbeath ellipsoid} $E^{\lambda}_\Omega(x)$ to be the maximum-volume ellipsoid contained within $M^{\lambda}_\Omega(x)$ (see Figure~\ref{macbeath.fig}(b)). Clearly, this ellipsoid is centered at $x$, and $E^{\lambda}_\Omega(x)$ is a $\lambda$-factor scaling of $E^1_\Omega(x)$ about $x$. By John's Theorem~\cite{Bal97}
\[
  E^{\lambda}_{\Omega}(x)
	~ \subseteq ~ M^{\lambda}_{\Omega}(x)
	~ \subseteq ~ E^{\lambda\sqrt{d}}_{\Omega}(x).
\]
 Chazelle and Matou\v{s}ek showed that this ellipsoid can be computed for a convex polytope $\Omega$ in time linear in the number of its bounding halfspaces \cite{ChM96}. 
 
 A fundamental property of Macbeath regions, called \emph{expansion-containment}, states that if two shrunken Macbeath regions (or ellipsoids) overlap, then a constant-factor expansion of one contains the other. There are many formulations. The following can be found in~\cite{AbM18}.

\begin{lemma}[Expansion-Containment] \label{lem:exp-con}
Given a convex body $\Omega \in \RE^d$, $0 < \lambda < 1$, let $\beta = (3+\lambda)/(1 - \lambda)$. Then for any $x, y \in \Omega$:
\begin{description}
\item[$(i)$] $M^{\lambda}(x) \cap M^{\lambda}(y) \neq \emptyset ~\Longrightarrow~ M^{\lambda}(y) \subseteq M^{\beta\lambda}(x)$,

\item[$(ii)$] $E^{\lambda}(x) \cap E^{\lambda}(y) \neq \emptyset ~\Longrightarrow~ E^{\lambda}(y) \subseteq E^{\beta\lambda\sqrt{d}}(x)$.
\end{description}
\end{lemma}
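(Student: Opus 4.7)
The plan is to prove (i) directly from the definition of Macbeath regions and then derive (ii) from (i) via John's Theorem. For (i), pick any $z \in M^{\lambda}(x) \cap M^{\lambda}(y)$ and, by definition, write $z = x + \lambda u = y + \lambda v$ with $x \pm u \in \Omega$ and $y \pm v \in \Omega$. Equating the two expressions yields the displacement identity $y - x = \lambda(u - v)$. Now fix an arbitrary $w \in M^{\lambda}(y)$, written as $w = y + \lambda s$ with $y \pm s \in \Omega$. Substituting gives
\[
    w - x ~ = ~ \lambda(u - v + s).
\]
Setting $t = (u - v + s)/\beta$, the goal reduces to exhibiting $x \pm t$ as members of $\Omega$.

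The core step is to realize $x + t$ and $x - t$ as convex combinations of points already known to lie in $\Omega$. For $x - t$, try the triple $x - u$, $y + v$, $y - s$ with nonnegative coefficients $\alpha_1, \alpha_2, \alpha_3$ summing to $1$, and match the $u$-, $v$-, and $s$-components of $x - (u - v + s)/\beta$, using the displacement identity to eliminate $y$. The $s$-component forces $\alpha_3 = 1/\beta$; the $u$- and $v$-components together force $\alpha_1 = \alpha_2$ and $\alpha_2(1 - \lambda) = (1 + \lambda)/\beta$. The sum-to-one constraint then pins down
\[
    \beta ~ = ~ \frac{3 + \lambda}{1 - \lambda},
\]
and all $\alpha_i$ are positive since $0 < \lambda < 1$. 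A mirror argument using the triple $x + u$, $y - v$, $y + s$ handles $x + t$ with the same $\beta$. Thus $w \in M^{\beta\lambda}(x)$, which establishes (i).

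For (ii), apply the John-type inclusions $E^{\lambda}(\cdot) \subseteq M^{\lambda}(\cdot) \subseteq E^{\lambda\sqrt{d}}(\cdot)$ on both sides: an overlap of $E^{\lambda}(x)$ and $E^{\lambda}(y)$ forces an overlap of $M^{\lambda}(x)$ and $M^{\lambda}(y)$, so (i) gives $M^{\lambda}(y) \subseteq M^{\beta\lambda}(x)$, whence
\[
    E^{\lambda}(y) ~ \subseteq ~ M^{\lambda}(y) ~ \subseteq ~ M^{\beta\lambda}(x) ~ \subseteq ~ E^{\beta\lambda\sqrt{d}}(x).
\]
The main obstacle is the bookkeeping of part (i): one must guess the correct three-point triple (rather than attempt a generic combination of all six witness points $x \pm u$, $y \pm v$, $y \pm s$) and verify non-negativity of the coefficients. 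It is precisely the asymmetric form of the $v$-equation --- which must absorb a $\lambda$-scaled contribution from the displacement together with an unscaled contribution from the vertex $y + v$ --- that produces the factor $1/(1-\lambda)$ in $\beta$, while the sum-to-one condition contributes the ``$3$'' as a count of the three witnesses being blended.
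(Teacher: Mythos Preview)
The paper does not prove this lemma; it simply states the result and cites~\cite{AbM18} for the proof. So there is no ``paper's own proof'' to compare against.

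Your argument is correct. The parameterization $z = x + \lambda u = y + \lambda v$ with $x \pm u, y \pm v \in \Omega$ is exactly what the definition $M^{\lambda}(x) = x + \lambda((\Omega - x) \cap (x - \Omega))$ gives, and the displacement identity $y - x = \lambda(u - v)$ follows immediately. The convex-combination check for $x - t$ using the triple $x - u$, $y + v$, $y - s$ goes through: with $\alpha_3 = 1/\beta$ forced by the $s$-component, the $u$- and $v$-equations do yield $\alpha_1 = \alpha_2 = (1+\lambda)/(\beta(1-\lambda))$, and the affine constraint $2\alpha_2 + 1/\beta = 1$ then pins $\beta = (3+\lambda)/(1-\lambda)$ with all coefficients strictly positive for $0 < \lambda < 1$. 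The mirror triple handles $x + t$. Part~(ii) is an immediate consequence of~(i) together with the John-ellipsoid sandwich $E^{\lambda}(\cdot) \subseteq M^{\lambda}(\cdot) \subseteq E^{\lambda\sqrt{d}}(\cdot)$ stated in the paper just before the lemma, exactly as you wrote. This is the standard proof one finds in the Macbeath-region literature.
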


\subsection{Approximation through Covering} \label{sec:cover}

Our approach to computing an $\eps$-approximation to the boundary distance function within a convex polytope $\Omega$ in $\RE^d$ utilizes Macbeath regions to cover the lifted body $\lift{\Omega}$ in $\RE^{d+1}$. Recall our assumption that $\Omega$ has been scaled to unit diameter, and hence the lifted body $\lift{\Omega}$ also has unit diameter. Given this scaling, our objective is to answer vertical ray-shooting queries up to an absolute error of at most $\eps$, (see Lemma~\ref{lem:lift-summary}).

Before presenting our solution, let us recall some known results for convex approximation. Given a convex body $\Omega$ of unit diameter and $\eps > 0$, an \emph{$\eps$-approximate polytope membership query} is given a query point $q$ and returns positive answer if $q$ lies within $\Omega$, a negative answer if $q$ lies at distance more than $\eps$ from $\Omega$, and otherwise, it may give either answer. Arya {\etal} presented an efficient data structure for answering approximate membership queries~\cite{AFM17a}. Later, Abdelkader and Mount~\cite{AbM18} presented a simpler approach with the same space and query times, as described in the following lemma. We will employ a variant of the latter data structure.

\begin{lemma} \label{lem:apx-membership}
Given a convex polytope $\Omega \in \RE^d$, there exists a data structure that can answer absolute $\eps$-approximate polytope membership queries for $\Omega$ in time $O(\log (1/\eps))$ and storage $O(1/\eps^{(d-1)/2})$.
\end{lemma}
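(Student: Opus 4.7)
The plan is to reduce the problem to point location within the $\eps$-thick boundary layer of $\Omega$, cover that layer with shrunken Macbeath ellipsoids, and organize these ellipsoids into a search hierarchy. Let $\Omega^{-\eps}$ denote the inner $\eps$-offset of $\Omega$. A query point $q$ lies within absolute distance $\eps$ of $\Omega$ iff either $q \in \Omega^{-\eps}$ (an ``easy'' positive case that can be certified by a single deep Macbeath region) or $q$ lies in a constant-factor expansion of some ellipsoid covering the layer $\Omega \setminus \Omega^{-\eps}$; otherwise $q$ is definitively outside.

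First I would construct the cover. For a geometric sequence of scales $\eps_k = 2^k \eps$ with $k = 0, 1, \ldots, K = \Theta(\log(1/\eps))$, I would select a maximal packing of shrunken Macbeath ellipsoids $E^{\lambda}(x)$ centered on $\bd \Omega^{-\eps_k}$, so that the ellipsoid's characteristic ``width'' in the normal direction is $\Theta(\eps_k)$. The classical enumeration bound for such packings, which underlies all Macbeath-based polytope approximation, yields $O(1/\eps_k^{(d-1)/2})$ ellipsoids at level $k$; summing the geometric series dominated by the $k=0$ term gives total size $O(1/\eps^{(d-1)/2})$, as required.

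For the query structure, I would link each level-$k$ ellipsoid to a level-$(k+1)$ parent whose constant-factor expansion contains it; the existence and uniqueness up to $O(1)$ choices of such a parent follows directly from the expansion-containment property (Lemma~\ref{lem:exp-con}) combined with the maximality of the packings. This yields a tree (or DAG) of depth $O(\log(1/\eps))$ in which every node has $O(1)$ children. A query descends this tree, performing one $O(1)$-time ellipsoidal containment test per child at each level, terminating either by reaching a level-$0$ ellipsoid whose associated supporting halfspace certifies membership or by finding no containing child, which certifies non-membership up to error $\eps$. The total query cost is therefore $O(\log(1/\eps))$.

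The main obstacle is verifying that the hierarchy is genuinely exhaustive: every point within $\eps_k$ of $\bd\Omega$ must lie in some level-$k$ expanded ellipsoid of the packing, and the set of level-$(k-1)$ children of any node must cover that node's expansion intersected with the deeper layer. This requires tightly interleaving the maximality of each Macbeath packing with the expansion-containment inequality, and controlling the implicit constants so that both the branching factor stays bounded and the expansions stay inside $\Omega$. Since the structural ingredients are fully developed in \cite{AFM17b, AbM18}, my proof would invoke those packing and containment lemmas as black boxes and then verify the storage and query bounds by a short geometric-series calculation rather than reproving the enumeration results from scratch.
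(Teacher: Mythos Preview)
The paper does not actually prove this lemma; it is stated as a known result from \cite{AFM17a,AbM18}, and the text that follows the lemma is an exposition of how the cited data structure works rather than a self-contained proof. Your sketch therefore goes beyond what the paper itself supplies, and it captures the same essential mechanism: a hierarchy of Macbeath-ellipsoid covers at geometrically increasing scales, linked via expansion-containment into a DAG of depth $O(\log(1/\eps))$ with $O(1)$ branching, whose total size is dominated by the finest level and hence is $O(1/\eps^{(d-1)/2})$.

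There is one organizational difference worth noting. You center ellipsoids on the surfaces $\partial\Omega^{-\eps_k}$ of inner offsets and handle the deep interior by a single ``easy'' region, so each level covers only an annular layer. The construction the paper describes (following \cite{AbM18}) instead takes, at level $i$, a Macbeath-based Delone set of $\Omega$ relative to the \emph{outward}-expanded body $\Omega_{\delta_i}$ with $\delta_i = 2^i\eps$; the covering ellipsoids at every level cover all of $\Omega$, and levels differ only in the expansion parameter. Both parameterizations yield the same bounds. The Delone-set formulation makes the parent--child linking and the $O(1)$ out-degree bound fall out of a single expansion-containment application between consecutive levels, without having to argue separately that annular covers nest correctly; your layer-based picture is perhaps more geometrically intuitive but, as you acknowledge, requires a bit more care to verify exhaustiveness across layers. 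Since you explicitly plan to invoke the packing and containment lemmas of \cite{AFM17b,AbM18} as black boxes, the two routes converge.
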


In order to apply this data structure for our purposes, we will need to delve a bit deeper into how it works. Our application of this structure will be in the lifted space $\RE^{d+1}$, but let us describe it now for an arbitrary convex body $\Omega$ in $\RE^d$. Given a non-negative parameter $\delta$, define the \emph{expanded body} $\Omega_{\delta}$ to be a convex set such that $\Omega \subseteq \Omega_{\delta}$, and the minimum distance between their boundaries of is at least $\delta$.

Next, we define the notion of a \emph{Macbeath-based Delone set} of $\Omega$ relative to $\Omega_{\delta}$. This structure is parameterized by two constants $0 < \lambda_p < \lambda_c < 1$, called the \emph{packing} and \emph{covering} constants, respectively (which may depend on the dimension $d$). Given any point $x \in \Omega$, define the \emph{covering ellipsoid} $E'_{\delta}(x) = E_{\Omega_{\delta}}^{\lambda_c}(x)$, that is, a Macbeath ellipsoid centered at $x$ with scaling factor $\lambda_c$ defined with respect to the expanded body $\Omega_{\delta}$. Define the \emph{packing ellipsoid} $E''_{\delta}(x)$ analogously, but with a scaling factor of $\lambda_p$. A \emph{Macbeath-based Delone set} for $\Omega$ relative to $\Omega_{\delta}$ is any maximal set of points $X \subset \Omega$, such that the packing ellipsoids $E''_{\delta}(x)$ centered at these points are pairwise disjoint. Abdelkader and Mount~\cite{AbM18} showed that, by standard properties of Macbeath regions, constants $\lambda_p$ and $\lambda_c$ can be chosen such that $X$ has the following properties: 
\begin{description}
\item[$(i)$] The union of the covering ellipsoids $E'_{\delta}(x)$ over all $x \in X$ covers the original body, $\Omega$,
\item[$(ii)$] For each $x \in X$, $E'_{\delta}(x)$ is contained within the expanded body, $\Omega_{\delta}$,
\item[$(iii)$] The number of ellipsoids $E'_{\delta}$ that contain any point $x$ is $O(1)$,
\item[$(iv)$] $|X| = O(1/\delta^{(d-1)/2})$.
\end{description}
Note that the constant factors hidden in the $O$-notation depend on the dimension $d$.

To turn this into an approximate search structure, a layered DAG is constructed as follows. For $i \geq 0$, let $\delta_i = 2^i \delta$. Construct a series of such Delone sets, $X_0, X_1, \ldots, X_m$ where $X_i$ is any Macbeath-based Delone set of $\Omega$ with respect to $\Omega_{\delta_i}$. As $i$ increases, the expanded body grows larger, and hence the Macbeath ellipsoids also grow larger. But since they need only cover the original body $\Omega$, their size, $|X_i|$, decreases with $i$. The final layer $\ell_m$ is defined to be the smallest integer such that $|X_{\ell_m}| = 1$. (It can be shown that $\ell_m = O(\diam(\Omega)) = O(1)$, which implies that $m = O(\log (1/\delta))$.) The leaves of the DAG correspond to the covering ellipsoids $E'_{\delta_0}$ centered at the points of $X_0$. The root corresponds to the covering ellipsoid $E'_{\delta_{\ell_m}}$ associated with the single point of $X_{\ell_m}$ (see Figure~\ref{fig:hierarchy-ellipsoid}). Finally, the nodes of level $i$ are connected to nodes at level $i-1$ whenever their associated $E'$ ellipsoids overlap. Abdelkader and Mount~\cite{AbM18} showed the following:
\begin{itemize}
\item The DAG has $O(\log (1/\eps))$ layers.
\item The out-degree of any node in the DAG is $O(1)$.
\item The total number of nodes in the DAG is $O(1/\delta^{(d-1)/2})$.
\end{itemize}
Lemma~\ref{lem:apx-membership} follows by applying a natural search process which simply descends from the root to any leaf in the DAG, always visiting a node whose $E'$ ellipsoid contains the query point. If a query point $q \in \Omega$, the search will succeed in finding a leaf-level ellipsoid $E'$ that contains this point.

\begin{figure}[htbp]
    \centering\includegraphics[scale=0.40]{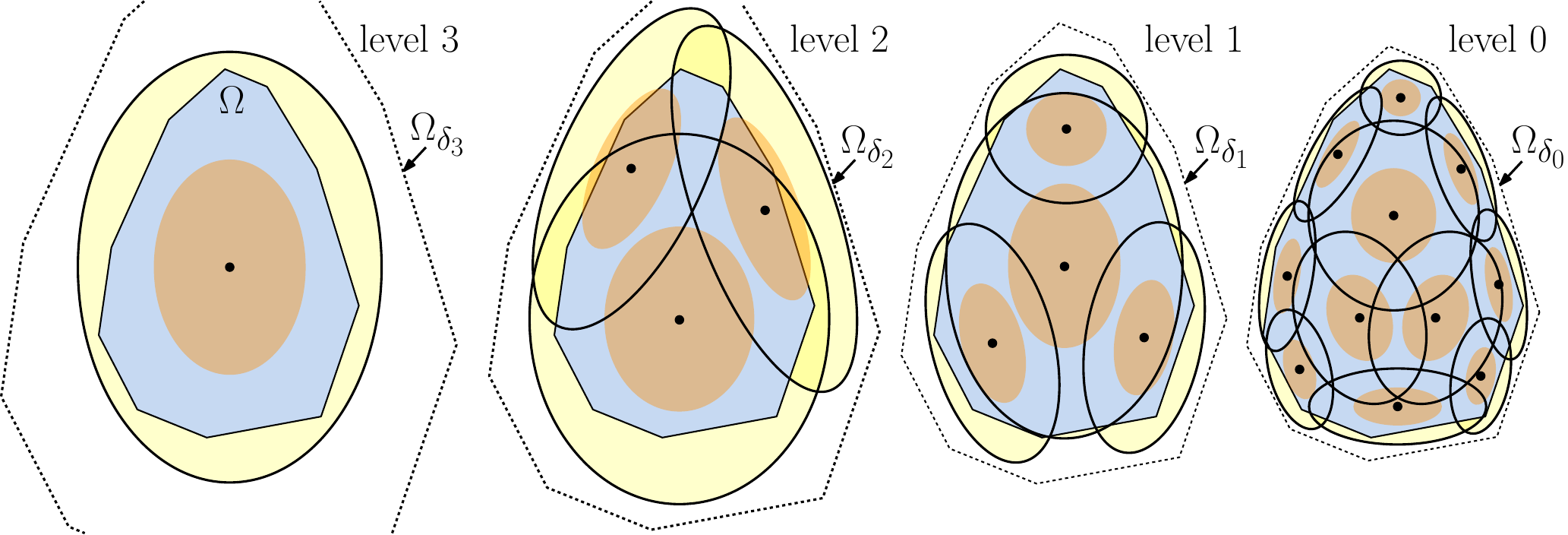}
    \caption{Layers of the Macbeath-based Delone set data structure.}
    \label{fig:hierarchy-ellipsoid}
\end{figure}

\subsection{Approximation through Vertical Ray Shooting} \label{sec:ray-shoot}

We can now explain how to construct the smooth boundary distance approximation for $\Omega$. First, we construct the lifted body $\lift{\Omega}$, as described in Section~\ref{sec:medial-apx}. Given a query point $q \in \Omega$, its distance to the boundary is determined by the height of the point on $\bd \kern+1pt \lift{\Omega}$ hit by an upward-directed vertical ray shot from $q$ in $\RE^{d+1}$. To apply the hierarchical search, let $\delta = \eps$, and for any $\ell \geq 0$, define $\lift{\Omega}_{\delta_{\ell}}$ to be the unbounded convex set that results by translating all the upper halfspaces bounding $\lift{\Omega}$ up by distance $\delta_{\ell} = 2^{\ell} \eps$. That is, $\lift{\Omega}_{\delta_{\ell}} = \bigcap_{j=1}^n \lift{H}^{\delta_{\ell}}_j$. Observe that $\lift{\Omega}_{\delta_0} = \lift{\Omega}^+$, the $\eps$-expanded body.

We can apply the data structure described in Lemma~\ref{lem:apx-membership} to these bodies in $\RE^{d+1}$. For any level of the structure, we say that a covering ellipsoid is a \emph{top ellipsoid} if there exists $x \in \Omega$, such that this ellipsoid has the highest intersection point with the vertical ray directed up from $x$ among all the ellipsoids in the cover. Because the covering ellipsoids cover $\lift{\Omega}$ it follows that the union of the top ellipsoids, when projected vertically onto $\RE^d$, covers the original body $\Omega$. Later on, the vertical projections of the top ellipsoids of the leaf level will serve as the covering patches $\{\Pi_i\}$ for the purposes of blending (see Figure~\ref{fig:cover}).

To answer vertical ray-shooting queries for point $q$ using our hierarchy, we traverse the hierarchy of ellipsoids, but whenever we descend a level in the DAG structure, among all the child nodes whose covering ellipsoid $E'$ intersects the vertical ray passing through $q$, we visit the one having the highest point of intersection with the ray (see Figure~\ref{fig:ray-shooting}). It was shown in~\cite{AbM18} that the number of ellipsoids that need to be considered is $O(1)$. Therefore, in time proportional to the number of levels, which is $O(\log (1/\eps))$, we can find the top ellipsoid at the leaf level traversed by the vertical ray.

\begin{figure}[htbp]
    \centering\includegraphics[scale=0.40]{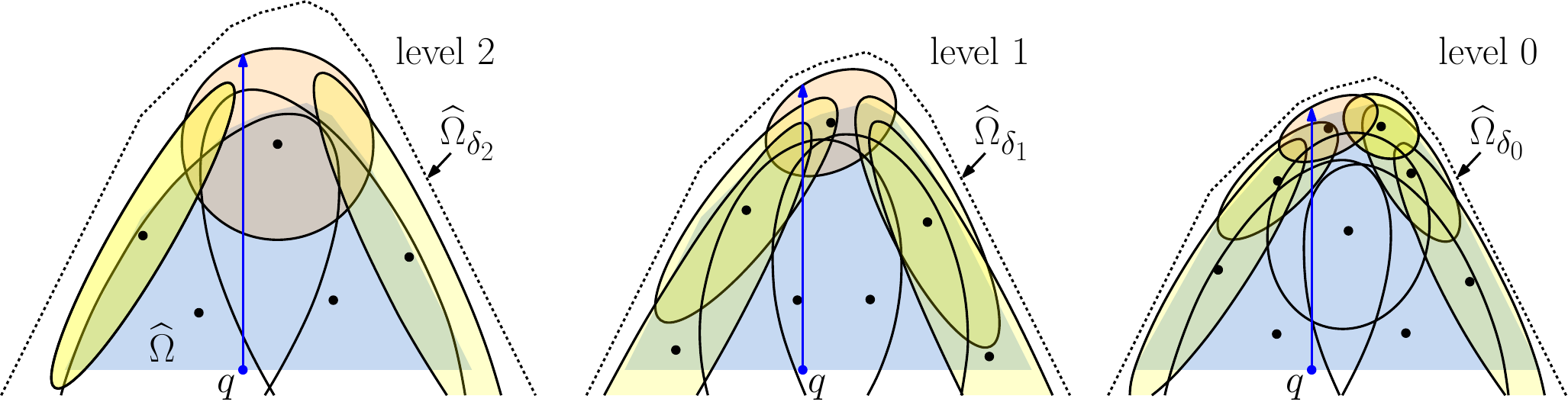}
    \caption{Vertical ray-shooting in the hierarchical structure.}
    \label{fig:ray-shooting}
\end{figure}

Furthermore, because all these ellipsoids lie within $\lift{\Omega}_{\delta_0}$, the terminus of the vertical ray lies within the vertical gap of length $\eps$ between $\lift{\Omega}$ and $\lift{\Omega}^+$, as required in Lemma~\ref{lem:lift-summary}. Finally, through an appropriate adjustment of the scaling factors $\lambda_p$ and $\lambda_c$, we can apply the same analysis as in Arya {\etal}~\cite[Lemma 3.5]{AAFM22} to find a witness hyperplane that serves as the representative for all vertical rays passing through this ellipsoid. Since the construction is performed in $\RE^{d+1}$, the space required is $O(1/\eps^{d/2})$. This implies that, even ignoring continuity, we can answer $\eps$-approximate boundary distance queries efficiently.

\begin{theorem} \label{thm:apx-lift-membership}
Given a convex polytope $\Omega \in \RE^d$, there exists a data structure that can answer absolute $\eps$-approximate boundary distance queries (without continuity guarantees) for $\Omega$ in time $O(\log (1/\eps))$ and storage $O(1/\eps^{d/2})$.
\end{theorem}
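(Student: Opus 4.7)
The plan is to instantiate the hierarchical Macbeath-based Delone structure from Lemma~\ref{lem:apx-membership} on the lifted body $\lift{\Omega} \subset \RE^{d+1}$ with base parameter $\delta_0 = \eps$, then modify the descent rule so that it resolves vertical ray-shooting instead of membership. By Lemma~\ref{lem:lift-summary}, if a vertical ray shot upward from $x \in \Omega$ hits $\bd \kern+1pt \lift{\Omega}$ within the $\eps$-thick slab between $\lift{\Omega}$ and $\lift{\Omega}^+$, then the associated bounding hyperplane of $\Omega$ is an absolute $\eps$-approximate nearest facet. Thus it suffices to locate a leaf covering ellipsoid of the DAG that the ray crosses and to read off the witness hyperplane stored there as the representative.

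The search maintains the invariant that, at each level $\ell$, we are visiting a \emph{top} covering ellipsoid $E'$, namely one whose intersection with the upward ray through $x$ attains the maximum height among all level-$\ell$ covering ellipsoids traversed by that ray. Starting from the unique node at the root level $\ell_m$, the descent to level $\ell - 1$ examines only the $O(1)$ children of the current node, i.e., those level-$(\ell-1)$ ellipsoids whose $E'$ overlaps the current $E'$, and selects the child whose covering ellipsoid attains the highest intersection with the ray. The expansion-containment property (Lemma~\ref{lem:exp-con}), together with the bounded out-degree property of the DAG from~\cite{AbM18}, ensures that the true top ellipsoid of level $\ell-1$ is indeed among these children, so the invariant propagates downward.

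When the descent terminates at a leaf, property $(ii)$ of the Macbeath-based Delone construction places the leaf covering ellipsoid $E'_{\delta_0}$ inside $\lift{\Omega}_{\delta_0} = \lift{\Omega}^+$, so the ray's exit from this ellipsoid lies within the $\eps$-slab above $\lift{\Omega}$ required by Lemma~\ref{lem:lift-summary}. By choosing $\lambda_p$ and $\lambda_c$ as in Arya \etal~\cite[Lemma 3.5]{AAFM22}, each leaf ellipsoid can be tagged with a single supporting hyperplane of a facet of $\Omega$ that serves as an absolute $\eps$-approximate representative for every vertical ray through it; the query simply returns the distance from $x$ to this hyperplane. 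Query time is the number of levels, $O(\log(1/\eps))$, times the $O(1)$ work per level; total space is dominated by the leaf layer, which by property $(iv)$ applied in dimension $d+1$ contains $O(1/\eps^{d/2})$ ellipsoids, and the remaining levels sum geometrically to the same order.

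The main technical obstacle is justifying that the ray-shooting descent really does keep the out-degree bounded by $O(1)$ while preserving the top-ellipsoid invariant. The original DAG in~\cite{AbM18} was analyzed for membership, where any child containing the query point suffices; here we must argue that the height-maximizing child at each level is reachable from the height-maximizing parent. This reduces, via Lemma~\ref{lem:exp-con}, to showing that any two overlapping covering ellipsoids at consecutive levels are related by constant-factor containment, so the single top child is guaranteed to sit among the constantly-many children of the current top node. Once this observation is in place, the remaining asymptotic bookkeeping is a direct transcription of the analysis in Section~\ref{sec:cover} with $d$ replaced by $d+1$.
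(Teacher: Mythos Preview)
Your proposal is correct and follows essentially the same approach as the paper: build the Macbeath-based Delone hierarchy of Section~\ref{sec:cover} on the lifted body $\lift{\Omega}\subset\RE^{d+1}$ with $\delta_0=\eps$, replace the membership descent by the ``highest-intersection child'' rule, invoke Lemma~\ref{lem:lift-summary} at the leaf, and read off the witness hyperplane via \cite[Lemma~3.5]{AAFM22}, obtaining $O(\log(1/\eps))$ time and $O(1/\eps^{d/2})$ space from property~(iv) in dimension $d{+}1$. Your explicit articulation of the top-ellipsoid invariant and the observation that its propagation is the main technical point are, if anything, more careful than the paper's own treatment, which simply defers that step to~\cite{AbM18}.
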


\subsection{Additional Properties} \label{sec:constructs_for_blending}
There are a couple of additional properties, which will be useful for the task of computing smooth distance approximations. First, because the boundaries of $\lift{\Omega}$ and $\lift{\Omega}^+$ are separated by a vertical distance of $\eps$, we can infer that the Macbeath ellipsoids cannot be too skinny.

\begin{restatable}{lemma}{BallContainment}\label{lem:ball-containment}
Each of the covering Macbeath ellipsoids in the data structure of Theorem~\ref{thm:apx-lift-membership} contains a Euclidean ball at its center of radius at least $c \kern+1pt \eps$, for some constant $c$ (depending on $\lambda_c$ and $d$).
\end{restatable}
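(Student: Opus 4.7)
The plan is to decompose the claim into two steps: first, lower bound the Euclidean distance from the center $x$ of each covering ellipsoid to the boundary of the corresponding expanded body $\lift{\Omega}_{\delta_\ell}$, and second, translate this distance into an inscribed ball of the Macbeath ellipsoid via the definition of $M^\lambda$ combined with John's theorem.

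First I would observe that every covering ellipsoid has the form $E^{\lambda_c}_{\lift{\Omega}_{\delta_\ell}}(x)$ with center $x \in \lift{\Omega}$ and level parameter $\delta_\ell = 2^\ell \eps \geq \eps$. Since the nested structure gives $\lift{\Omega}_{\delta_\ell} \supseteq \lift{\Omega}^+$ for all $\ell \geq 0$, it suffices to prove $r := \dist(x, \bd \lift{\Omega}^+) \geq \eps/\sqrt{2}$ for every $x \in \lift{\Omega}$. This is the geometric heart of the argument: each shifted upper halfspace $\lift{H}^{\eps}_j = \{(y;z) : a_j^{\Transpose} y + z \leq b_j + \eps\}$ has a bounding hyperplane whose normal vector $(a_j, 1)$ has Euclidean length $\sqrt{\|a_j\|^2 + 1} = \sqrt{2}$ (the slope-$1$ sides of $\lift{\Omega}$). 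Hence, for $x = (y_0; z_0) \in \lift{\Omega}$, we have $\dist(x, \bd \lift{H}^{\eps}_j) = (\eps + b_j - a_j^{\Transpose} y_0 - z_0)/\sqrt{2} \geq \eps/\sqrt{2}$, using the containment inequality $z_0 \leq b_j - a_j^{\Transpose} y_0$. Taking the minimum over $j$ (since the distance from an interior point of an intersection of halfspaces to the boundary equals the minimum distance to the individual halfspace boundaries) extends the bound to $\bd \lift{\Omega}^+$.

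Next I would unwrap the Macbeath region. Since $B(x, r) \subseteq \lift{\Omega}_{\delta_\ell}$ and $B(0,r)$ is centrally symmetric, the ball $B(0, r)$ is contained in both $\lift{\Omega}_{\delta_\ell} - x$ and $x - \lift{\Omega}_{\delta_\ell}$, so $B(x, \lambda_c r) \subseteq M^{\lambda_c}_{\lift{\Omega}_{\delta_\ell}}(x)$. Invoking John's theorem in dimension $d+1$ yields $M^{\lambda_c}_{\lift{\Omega}_{\delta_\ell}}(x) \subseteq E^{\lambda_c \sqrt{d+1}}_{\lift{\Omega}_{\delta_\ell}}(x)$, and since the right side is the $\sqrt{d+1}$-dilation of $E^{\lambda_c}_{\lift{\Omega}_{\delta_\ell}}(x)$ about $x$, the inscribed ball of the Macbeath region contracts by the same factor when restricted to the Macbeath ellipsoid itself, yielding a ball of radius $\lambda_c \eps / \sqrt{2(d+1)}$ inside $E^{\lambda_c}_{\lift{\Omega}_{\delta_\ell}}(x)$. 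This establishes the claim with $c = \lambda_c/\sqrt{2(d+1)}$.

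I anticipate no deep obstacle; the main care needed is bookkeeping in the first step: the vertical expansion by $\eps$ translates to a Euclidean shift of $\eps/\sqrt{2}$ on the slope-$1$ facets, and the monotonicity $\lift{\Omega}_{\delta_\ell} \supseteq \lift{\Omega}^+$ ensures that the tightest bound, which arises at the leaf level $\ell = 0$, governs all levels of the hierarchy.
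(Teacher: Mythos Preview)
Your proposal is correct and follows essentially the same two-step approach as the paper: first lower-bound the distance from the center to the boundary of the expanded body, then pass from the Macbeath region to the Macbeath ellipsoid via John's theorem. Your version is in fact slightly more careful---you compute the sharper distance $\eps/\sqrt{2}$ (the paper uses a looser $\eps/\sqrt{d}$), you correctly use the John factor $\sqrt{d+1}$ appropriate to $\RE^{d+1}$, and you explicitly handle all levels of the hierarchy via the monotonicity $\lift{\Omega}_{\delta_\ell}\supseteq\lift{\Omega}^+$ rather than only the leaf level.
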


\begin{proof}
All covering ellipsoids lie within $\lift{\Omega}$, and their Macbeath regions are defined with respect to $\lift{\Omega}^+$. Recall that the boundary of $\lift{\Omega}^+$ has a vertical separation of $\eps$ from $\lift{\Omega}$, and the facets of both bodies have a slope (with respect to vertical) of at most $1$ (see Figure~\ref{fig:medial-lift}). Hence, the distance from any point in $\lift{\Omega}$ to the boundary of $\lift{\Omega}^+$ is at least $\eps/\sqrt{d}$. Therefore, the Macbeath region contains a ball of this radius at its center. Scaling by a factor of $\lambda_c$, the covering Macbeath region has a ball of radius $\eps(\lambda_c/\sqrt{d})$. By John's theorem, the covering ellipsoid contains a ball of radius $\eps(\lambda_c/\sqrt{d})/\sqrt{d} = \eps(\lambda_c/d)$. Setting $c = \lambda_c/d$, completes the proof.
\end{proof}

Second, from the properties of the Macbeath-based Delone set, each point of $\lift{\Omega}$ is covered by only a constant number of covering ellipsoids at the leaf level. While this does not necessarily hold for the vertical projections of these ellipsoids, it does hold when we restrict attention to the top ellipsoids. Let $\depth{\Pi}$ denote maximum number of ellipsoids that may contain any point of $\Omega$.

\begin{restatable}{lemma}{PUDepth} \label{lem:PU_depth}
The blending patches $\Pi$ resulting from the vertical projections of the top covering ellipsoids have constant depth, that is, $\depth{\Pi} = O(1)$.
\end{restatable}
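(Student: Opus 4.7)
The plan is to show that the top ellipsoids whose vertical projections contain a common point $x \in \Omega$ must all cluster near a single location in $\RE^{d+1}$, after which the bounded depth of the underlying Macbeath-based Delone set takes over. The argument combines three ingredients: the constant depth of the leaf-level covering in $\RE^{d+1}$ (property $(iii)$ of the Delone set), the expansion-containment property (Lemma~\ref{lem:exp-con}), and the ball-containment guarantee (Lemma~\ref{lem:ball-containment}).

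First, I would characterize where top ellipsoids live. If $E$ is top at some witness $y_E \in \Omega$, then its topmost intersection with the vertical ray through $y_E$ lies in the vertical strip $[h(y_E), h(y_E)+\eps]$, where $h(y) = \dm_{\Omega}(y)$: the lower bound comes from the fact that $E$ must dominate the ellipsoid covering $(y_E, h(y_E))$ at $y_E$, and the upper bound comes from $E \subseteq \lift{\Omega}^+$. Combined with Lemma~\ref{lem:ball-containment}, this pins the center $c_E$ to within $O(\eps)$ of $\bd\lift{\Omega}$ along the direction perpendicular to the nearest bounding facet.

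Next, fix a query point $x \in \Omega$ and its lift $x^* = (x, h(x)) \in \bd\lift{\Omega}$. For every top ellipsoid $E$ with $x \in \pi(E)$, I would argue that $x^*$ lies in a constant-factor expansion $M^{\beta\lambda_c}(c_E)$ of $E$'s Macbeath region. The vertical ray through $x$ meets $E$ in a segment whose top $(x, z_E(x))$ lies within vertical distance $\eps$ of $x^*$, and Lemma~\ref{lem:ball-containment} ensures there is a Euclidean ball of radius $\Omega(\eps)$ inside $E$, so a constant multiplicative expansion of $E$ absorbs this $\eps$-gap. From here, since the expanded Macbeath regions $M^{\beta\lambda_c}(c_E)$ all share the point $x^*$, repeated application of Lemma~\ref{lem:exp-con}$(i)$ places all of them inside a single constant-scale Macbeath region around any fixed one of the centers. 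The disjoint packing ellipsoids of the Delone set are each of size $\Omega(\eps)$ (Lemma~\ref{lem:ball-containment}), which bounds the number of centers $c_E$ fitting inside this region by a constant depending only on $\lambda_p$, $\lambda_c$, and $d$.

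The main technical obstacle is the second step. For a highly elongated top ellipsoid $E$, its projection $\pi(E)$ could extend far horizontally from its witness $y_E$ out to $x$, and $z_E(x)$ could sit well below $h(x)$, so showing that $x^*$ lies in an $O(1)$-expansion of $E$ requires more than the naive vertical-gap argument. Overcoming this will rely on the reflection-symmetric structure of the Macbeath region defined with respect to $\lift{\Omega}^+$: because $\bd\lift{\Omega}^+$ is a parallel offset of $\bd\lift{\Omega}$ and $c_E$ is forced close to the facet supporting $(y_E, z_E(y_E))$, the Macbeath region at $c_E$ automatically tracks the boundary geometry near $x^*$. A mild enlargement of the covering scale $\lambda_c$ (while keeping $\lambda_p$ small enough for packing) should provide the slack needed to make the expansion contain $x^*$ in every case.
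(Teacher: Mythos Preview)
Your overall strategy---show that every top ellipsoid whose projection contains $x$ must live near the lifted point $x^*=(x,h(x))$, then invoke packing---is the right one, and it matches the paper's in spirit. But the argument has a real gap at the step you yourself flag as the main obstacle, and the proposed fix is too vague to carry the proof.

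The missing ingredient is a quantitative statement that the \emph{vertical ray distance} to $\bd\lift{\Omega}^+$ varies by at most a bounded multiplicative factor over a shrunken Macbeath region. Concretely: if $y\in M^{\lambda}_{\lift{\Omega}^+}(p)$ then $(1-\lambda)\,\ray(p)\le\ray(y)\le(1+\lambda)\,\ray(p)$. This is the paper's Lemma~\ref{lem:ray-dist}, and it is doing exactly the work you are trying to get from ``reflection-symmetric structure'' and ``parallel offset'' hand-waving. You already need it in your first step: knowing that $E$ contains a point with $\ray\le\eps$ (its topmost point over the witness $y_E$) does \emph{not} by itself pin $\ray(c_E)$ to $O(\eps)$ without such a lemma; ball-containment alone gives a lower bound on the ellipsoid's width, not an upper bound on its center's ray distance. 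And you need it again in the second step to conclude that \emph{every} point of $E$---in particular the intersection with the vertical line through $x$---has ray distance $\Theta(\eps)$, which is what forces $z_E(x)$ to sit within $O(\eps)$ of $h(x)$. Adjusting $\lambda_c$ does not substitute for this: the issue is not slack in the scaling but the absence of any control relating horizontal extent to vertical position.

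Once ray distances are controlled, the paper's endgame differs slightly from yours. Rather than showing directly that all the top ellipsoids fit in one constant-scale Macbeath region about $x^*$, the paper places an auxiliary maximal packing $Y(q)$ of points along the vertical line through $q$ within the $\Theta(\eps)$ ray-distance band, bounds $|Y(q)|=O(1)$ by a one-dimensional packing argument, and then associates each top-ellipsoid center to some $y\in Y(q)$ via overlapping covering Macbeath regions, finishing with expansion--containment and a volume comparison. Your direct route (all expanded regions share $x^*$, hence sit in one big region, hence packing bounds the count) is also viable once Lemma~\ref{lem:ray-dist} is in hand, but as written it still requires you to check that $x^*$ actually lands in a bounded expansion of each $E$, which again reduces to the ray-distance control you have not supplied.
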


Before giving the proof, we establish a useful technical lemma. Let us begin with some notation. Given a concave function $f: \RE^d \to \RE$, its \emph{hypograph}, denoted $f^-$, is the set of points in $\RE^{d+1}$ lying on or below the function. Clearly, $f^-$ is a convex set. For any point $x \in f^-$, define the \emph{ray distance} of $x$ with respect to $f^-$, denoted $\ray_{f^-}(x)$ to be the length of a ray shot upwards from $x$ to the boundary of $f^-$. For any $x \in f^-$ and $\lambda \geq 0$, define $M^{\lambda}_{f^-}(x)$ as the $\lambda$-scaled Macbeath region relative to $f^-$. We omit explicit references to $f^-$ in subscripts when it is clear from context.

\begin{lemma} \label{lem:ray-dist}
Given concave $f: \RE^d \to \RE$, $x \in f^-$ and $\lambda \geq 0$, for all $y \in M^{\lambda}_{f^-}(x)$,
\[
    (1 - \lambda) \cdot \ray_{f^-}(x)
        ~ \leq ~ \ray_{f^-}(y)
        ~ \leq ~ (1 + \lambda) \cdot \ray_{f^-}(x).
\]
\end{lemma}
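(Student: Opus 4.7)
The plan is to exploit concavity in two separate ways: first, to recognize that the ray-distance function $\ray_{f^-}$ is itself concave on $f^-$, and second, to reinterpret the membership $y \in M^{\lambda}_{f^-}(x)$ as two complementary convex combinations relating $x$, $y$, and points in $f^-$. Once both observations are in place, the lemma falls out of Jensen's inequality with essentially no computation.

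For the first step I would unpack the Macbeath definition. Writing $y = x + \lambda v$ with $v \in (f^- - x) \cap (x - f^-)$, the points $p := x + v$ and $q := x - v$ both lie in $f^-$. A direct rearrangement yields the two identities
\[
    y ~ = ~ (1-\lambda) x + \lambda p
    \qquad\text{and}\qquad
    x ~ = ~ \frac{1}{1+\lambda}\, y + \frac{\lambda}{1+\lambda}\, q.
\]
So $y$ is a convex combination of $x$ and a point of $f^-$, and symmetrically $x$ is a convex combination of $y$ and another point of $f^-$.

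For the second step I would verify that $\ray_{f^-}$ is concave on $f^-$. Writing a generic point of $f^-$ as $(a, b)$ with $a \in \RE^d$ and $b \leq f(a)$, one simply has $\ray_{f^-}(a,b) = f(a) - b$, which is the sum of the concave function $a \mapsto f(a)$ and the linear function $(a,b) \mapsto -b$, hence concave. Moreover, $\ray_{f^-} \geq 0$ on $f^-$.

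Now I would combine the two ingredients. Applying concavity of $\ray_{f^-}$ to the first convex combination gives
\[
    \ray_{f^-}(y) ~ \geq ~ (1-\lambda)\, \ray_{f^-}(x) + \lambda\, \ray_{f^-}(p) ~ \geq ~ (1-\lambda)\, \ray_{f^-}(x),
\]
where the second step uses $\ray_{f^-}(p) \geq 0$. Applying it to the second convex combination gives
\[
    \ray_{f^-}(x) ~ \geq ~ \frac{1}{1+\lambda}\, \ray_{f^-}(y) + \frac{\lambda}{1+\lambda}\, \ray_{f^-}(q) ~ \geq ~ \frac{1}{1+\lambda}\, \ray_{f^-}(y),
\]
which rearranges to $\ray_{f^-}(y) \leq (1+\lambda)\, \ray_{f^-}(x)$. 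The two bounds together are exactly the claim. There is no genuine obstacle here; the only minor care is in the algebraic manipulation of step one, making sure to extract both the $y = (1-\lambda)x + \lambda p$ identity (which yields the lower bound) and the reversed combination $x = \tfrac{1}{1+\lambda} y + \tfrac{\lambda}{1+\lambda} q$ (which yields the upper bound), rather than only one of them.
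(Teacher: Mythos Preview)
Your proof is correct and is in fact cleaner than the paper's. Both arguments rest on concavity, but the paper works geometrically: for the upper bound it takes a supporting hyperplane for $f^-$ at the point where the vertical ray from $x$ exits, then sandwiches $y$ between two parallel translates of that hyperplane; for the lower bound it repeats the construction with a fresh supporting hyperplane taken at the exit point above $y$. Your approach short-circuits these two separate constructions by observing once that $\ray_{f^-}(a,b) = f(a) - b$ is itself a concave function on $\RE^{d+1}$, and then reading off both inequalities from Jensen applied to the two convex combinations you isolated from the Macbeath definition. What the paper's version buys is a picture and no explicit appeal to Jensen; what yours buys is brevity and a uniform treatment of both directions without choosing different hyperplanes for each.

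One small caveat worth recording: your first identity $y = (1-\lambda)x + \lambda p$ is a genuine convex combination only when $\lambda \le 1$, so the Jensen step for the lower bound as written needs $\lambda \in [0,1]$. This is harmless, since for $\lambda > 1$ the lower bound $(1-\lambda)\ray_{f^-}(x) \le \ray_{f^-}(y)$ is vacuous (the left side is nonpositive), and in any case the paper only ever invokes the lemma with $\lambda = \lambda_c < 1$.
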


\begin{proof}
To simplify notation, let $r_x = \ray_{f^-}(x)$ and $r_y = \ray_{f^-}(y)$. To prove the upper bound, let $x'$ denote the intersection of the vertical ray through $x$ with the boundary of $f^-$ (see Figure~\ref{fig:ray-dist}(a)). Consider a supporting hyperplane $h_0$ for $f^-$ passing through $x'$. Let $h_1$ be the parallel supporting hyperplane passing through $x$, and let $h_2$ be the parallel supporting hyperplane along the lower side of $M^{\lambda}(x)$. Clearly, the vertical distance between $h_0$ and $h_1$ is $r_x$, and the vertical distance between $h_1$ and $h_2$ is $\lambda r_x$. Since $M^{\lambda}(x)$ lies entirely above $h_2$, it follows that the vertical segment defining $r_y$ lies entirely below $h_0$ and above $h_2$, which implies that $r_y \leq r_x + \lambda r_x = (1 + \lambda) r_x$, as desired.

\begin{figure}[htbp]
    \centering\includegraphics[scale=0.40]{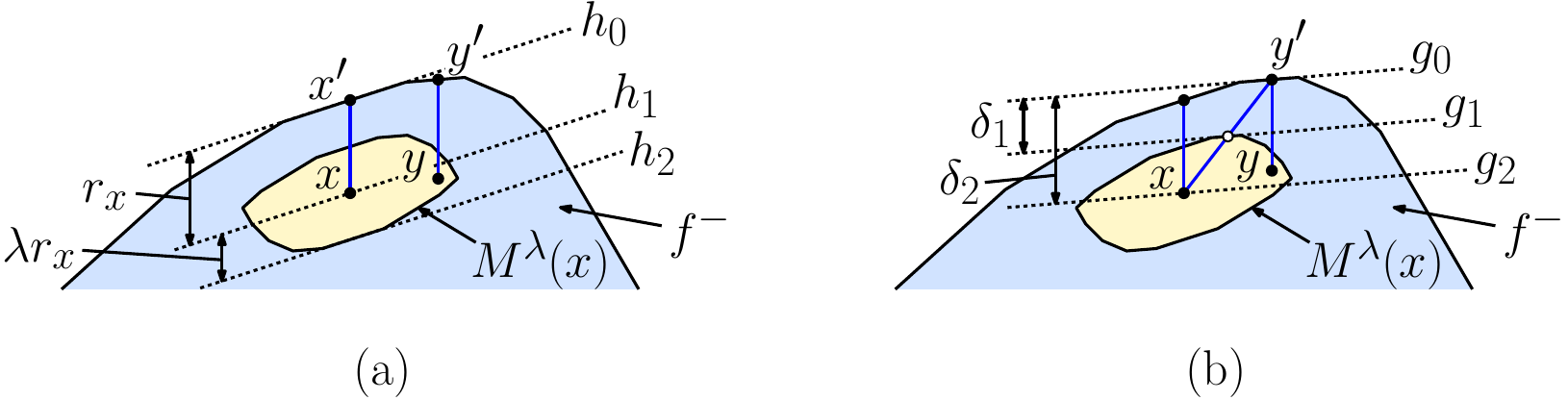}
    \caption{Proof of Lemma~\ref{lem:ray-dist}.}
    \label{fig:ray-dist}
\end{figure}

To prove the lower bound, let $y'$ denote the point where the vertical ray through $y$ intersects the boundary of $f^-$ (see Figure~\ref{fig:ray-dist}(b)). Let $g_0$ denote a supporting hyperplane for $f^-$ passing through $y'$. Let $g_1$ be the upper parallel supporting hyperplane for $M^{\lambda}(x)$, and let $g_2$ be the parallel hyperplane passing through $x$. Let $\delta_1$ denote the vertical distance between $g_0$ and $g_1$, and let $\delta_2$ denote the vertical distance between $g_0$ and $g_2$. By definition of the Macbeath region, we have $\delta_2 - \delta_1 = \lambda \delta_2$, or equivalently $\delta_1 = (1 - \lambda) \delta_2$. Clearly, $y$ lies below $g_1$, and so $r_y \geq \delta_1$. Since all of $f^-$ (including $x'$) lies below $g_0$, we have $r_x \leq \delta_2$. Therefore, $r_y \geq \delta_1 = (1-\lambda) \delta_2 \geq (1 - \lambda) r_x$, as desired.
\end{proof}

\begin{proof} (Of Lemma~\ref{lem:PU_depth}.)

Recall that constants $\lambda_c$ and $\lambda_p$ are the so called \emph{covering} and \emph{packing} scale factors used in our construction. Given a point $x \in \Omega^+$, let $M'(x)$ and $M''(x)$ denote respectively the covering ($\lambda_c$-scaled) and packing ($\lambda_p$-scaled) Macbeath regions centered at $x$ with respect to the expanded body $\lift{\Omega}^+$. Define $E'(x)$ and $E''(x)$ analogously for Macbeath ellipsoids. 

Recall that our construction is based on a maximal point set $X \subset \lift{\Omega}$ such that the packing ellipsoids $E''(x)$ are disjoint for $x \in X$ and $E'(x)$ cover $\lift{\Omega}$. Given any $q \in \Omega$, let $X(q) \subseteq X$ denote the set of top covering ellipsoids $E'(x)$ whose vertical projection contains $q$. Equivalently, $x \in X(q)$ if the vertical ray passing through $q$ intersects $E'(x)$. It suffices to show that for any $q \in \Omega$, $|X(q)| = O(1)$.

For any $x \in \lift{\Omega}^+$, define $\ray(x)$ to be the length of a vertical ray shot from $x$ up to the boundary of $\lift{\Omega}^+$. By definition of a top ellipsoid, for any $x \in X(q)$, there exists a point $z \in E'(x)$ such that $\ray(z) \leq \eps$. Since $E'(x) \subseteq M'(x)$, we have $z \in M'(x)$ (see Figure~\ref{fig:pu-depth}(a)). Thus, by Lemma~\ref{lem:ray-dist} with $\lift{\Omega}^+$ playing the role of $f^-$ and $z$ playing the role of $y$, it follows that $\ray(x) \leq \ray(z)/(1-\lambda_c)  \leq \eps/(1-\lambda_c)$. Applying the lemma again, it follows that for any other point $y \in E'(x)$, we have 
\[
    \ray(y) 
            ~ \leq ~ (1+\lambda_c) \cdot \ray(x)
            ~ \leq ~ \frac{1+\lambda_c}{1-\lambda_c} \kern+1pt \eps.
\]
Also, because $x \in \lift{\Omega}$, we have $\ray(x) \geq \eps$, implying again by Lemma~\ref{lem:ray-dist} that $\ray(y) \geq (1-\lambda_c) \eps$. In summary, for each $x \in X(q)$, there exists a point $y$ along the vertical ray shot up from $q$ such that $(1-\lambda_c) \eps \leq \ray(y) \leq \frac{1+\lambda_c}{1-\lambda_c} \kern+1pt \eps$.

\begin{figure}[htbp]
    \centering\includegraphics[scale=0.40]{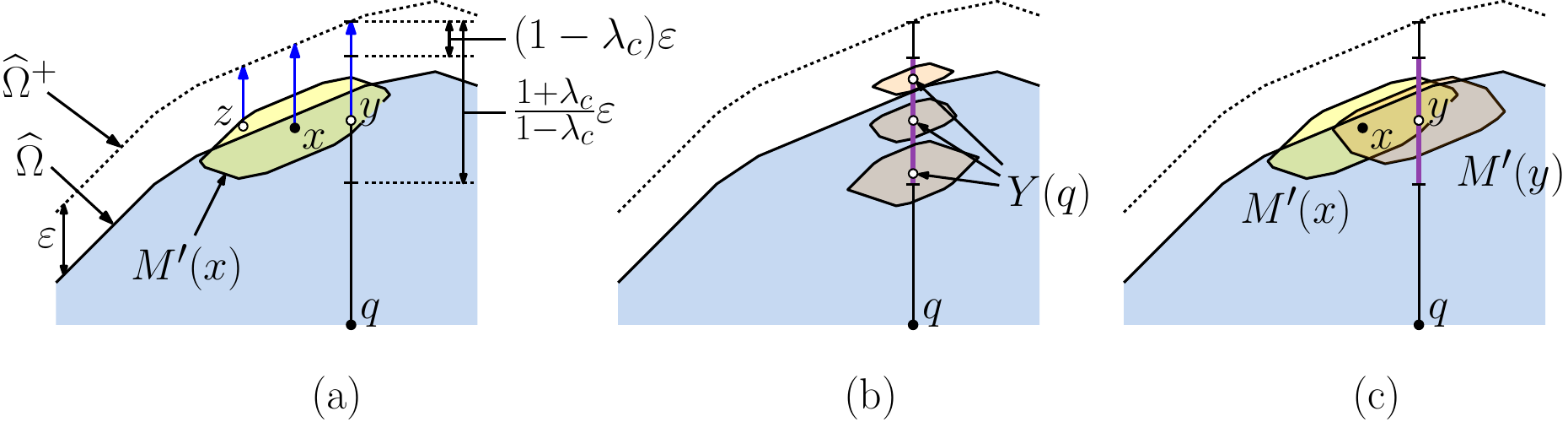}
    \caption{Proof of Lemma~\ref{lem:PU_depth}.}
    \label{fig:pu-depth}
\end{figure}

Let $Y(q)$ be any maximal set of points along the vertical line through $q$ that have ray distances in the interval $\big[(1-\lambda_c), \frac{1+\lambda_c}{1-\lambda_c}\big] \eps$ and whose packing Macbeath regions are pairwise disjoint (see Figure~\ref{fig:pu-depth}(b)). Each such Macbeath region covers an interval of length at least $\lambda_p (1 - \lambda_c) \eps$. By a standard packing argument, there are at most a constant $c'$ (depending on $\lambda_p$ and $\lambda_c$) of such Macbeath regions, and their covering Macbeath regions cover this subsegment of the vertical line. 

Now, associate each point $x \in X(q)$ with any one of the points of $y \in Y(q)$, such that $M'(x) \cap M'(y) \neq \emptyset$ (see Figure~\ref{fig:pu-depth}(c)). By the prior observations, such a point of $y$ exists for each $x$. By expansion containment (Lemma~\ref{lem:exp-con}), a constant factor expansion of $M'(y)$ contains $M'(x)$ and vice versa. Therefore, the volumes of these bodies are equal up to constant factors (depending on $\lambda_c$ and the dimension $d$). Because $\lambda_p$ and $\lambda_c$ are both constants, the volumes of $M''(x)$ and $M''(y)$ are related by constant factors. Thus, by a straightforward packing argument, the disjointness of the $M''(x)$ Macbeath regions implies that the number of $x \in X(q)$ that are associated with any $y \in Y(q)$ is bounded above by some constant $c''$. Thus, we have $|X(q)| \leq c' c'' = O(1)$, as desired.
\end{proof}

\section{Putting it Together} \label{sec:together}

We can now explain how to combine the results of the previous section with the partition-of-unity method from Section~\ref{sec:pou}, to obtain the final smooth distance approximation. 

The set of patches $\Pi = \{\Pi_i\}$ used in blending consist of the vertical projections of all the top ellipsoids from level-0 of the vertical ray-shooting data structure. Each ellipsoidal patch $\Pi_i$ is represented by its center $c_i \in \RE^d$ and a positive-definite matrix $M_i$ such that
\begin{equation}\label{eq:shapes}
    \Pi_i ~ = ~ \{ x \in \RE^d : f_i(x) \leq 1\}, 
        \qquad\text{where~} f_i(x) ~ = ~ (x - c_i)^{\intercal}M_i (x - c_i).
\end{equation}
Recalling the definition of the standard mollifier $\mu$ from Eq.~\eqref{eq:bump}, we define
\begin{equation}\label{eq:psi_i}
    \psi_i(x) ~ = ~ \mu(f_i(x)).
\end{equation}
Given these weight functions, we apply Eq.~\eqref{eq:pu-weight} to obtain the blending function $\phi_i(x)$ for each patch. Recall that each of the top ellipsoids $\Pi_i$ is associated with a representative of the upper envelope of $\lift{\Omega}$ in the form of a halfspace $H_i = \{x \in \RE^d \ST a_i^{\Transpose} x \leq b_i \}$. As mentioned in Section~\ref{sec:medial-apx}, the associated local distance function is $v_i(x) = b_i - a_i^{\Transpose} x$ (Eq.~\eqref{eq:v_i}).

Given a query point $q \in \Omega$, we use the vertical ray-shooting data structure to determine the patches $\Pi_i$ containing it. By Lemma~\ref{lem:PU_depth}, there are a constant number of them. We apply Eq.~\eqref{eq:pu-total} to blend together the local distance functions to obtain the final distance approximation, $\adm_{\Omega}(q)$. The space and query time are dominated by the complexity bounds for the ray-shooting data structure, given in Lemma~\ref{thm:apx-lift-membership}. This establishes the correctness and complexity bounds of Theorem~\ref{thm:main}. The bounds on the norms of the gradient and Hessian are presented in the next section.

\section{Partition of Unity Derivations} \label{sec:pu-math}

In this section, we derive various expressions of the gradients of the partition of unity and its constituent functions, culminating in an simple expression of the gradient of the distance approximation $\Gradient \adm_{\Omega}(x)$ along with an upper bound on its magnitude. Recall the we utilize a locally-finite open cover of a domain $\Omega \subseteq \RE^d$ by a collection of patches $\{\Pi_i\}$ with a partition of unity $\{\phi_i\}$ subordinate to the cover, i.e. $\supp(\phi_i) \subseteq \Pi_i$. Each patch $\Pi_i$ is associated with a local distance approximation $v_i(x) = b_i - a_i^{\Transpose} x$, where $(b_i, a_i)$ are the defining quantities for the hyperplane bounding the closest facet to the center point of the patch. The approximate distance function is the blend of these local distance functions
\[
    \adm_{\Omega}(x) 
        ~ = ~ \sum_i \phi_i(x) \cdot v_i(x).
\]

Recall the bump function $\mu: \RE \rightarrow \RE_{\geq 0}$ from Eq.~\eqref{eq:bump}, with $\supp(\mu) \subseteq [0, 1]$:
\[
    \mu(\sigma) 
        ~ = ~ \begin{cases}
                \exp\left(- \dfrac{1}{1 - \sigma^2}\right),   & |\sigma| < 1,\\
                0,                                            & \text{otherwise.}
            \end{cases}
\]
In what follows, the support of all functions is similarly bounded, and we omit explicit references to the complement of the support. Recall that we obtained the weight functions defining the partition of unity over a set of patches $\{\Pi_i\}$ using the following function as input to $\mu$:
\[
    f_i(x; c_i, M_i)   
          ~ = ~ (x - c_i)^{\intercal}M_i(x - c_i)
\]
The partition of unity is the set of functions $\bigg\{\phi_i = \dfrac{\psi_i}{\Psi}\bigg\}$, with $\Psi(x) = \sum_i \psi_i(x)$, such that $\psi_i(x) = \mu(f_i(x))$. The following lemma establishes a lower bound on $\Psi$.

\begin{lemma} \label{lem:positive_Psi}
For a suitable choice of the packing constant $\lambda_p$, the data structure described in Section~\ref{sec:medial-apx} ensures that $\Psi_{\min} > 1/4$.
\end{lemma}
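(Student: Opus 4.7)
The plan is to establish something stronger than the lemma asks: I would show that at every $q \in \Omega$, \emph{some single} weight already exceeds $1/4$, so that $\Psi(q) \geq \psi_i(q) > 1/4$ follows immediately. Since $\psi_i(q) = \mu(f_i(q))$ and $\mu(\sigma) = e^{-1/(1-\sigma)}$ is strictly decreasing on $[0,1)$ with $\mu(s_0) = 1/4$ at $s_0 = 1 - 1/\ln 4 \approx 0.279$, it suffices to produce an index $i$ with $f_i(q) < s_0$.

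Fix $q \in \Omega$ and lift it to a suitable point $\tilde q$ on the vertical ray through $q$ in $\RE^{d+1}$, chosen near the top of the ray within $\lift{\Omega}$ so that any packing center close to $\tilde q$ in the Macbeath sense is the center of a top ellipsoid at the leaf level of the ray-shooting data structure. This calibration uses the vertical ray-shooting descent described in Section~\ref{sec:ray-shoot} along with the vertical separation of $\eps$ between $\bd\SP\lift{\Omega}$ and $\bd\SP\lift{\Omega}^+$ noted in Lemma~\ref{lem:ball-containment}.

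Next, invoke maximality of the packing set $\{E''(x_i)\}$: the ellipsoid $E^{\lambda_p}(\tilde q)$ cannot be disjoint from every $E^{\lambda_p}(x_i)$, since otherwise $\tilde q$ could be adjoined to $X$. Hence it overlaps some $E^{\lambda_p}(x_j)$, and the expansion-containment property (Lemma~\ref{lem:exp-con}(ii)) applied in $\RE^{d+1}$ yields
\[
    \tilde q ~ \in ~ E^{\beta \lambda_p \sqrt{d+1}}(x_j),
    \qquad \text{where } \beta = (3+\lambda_p)/(1-\lambda_p).
\]
Projecting vertically onto $\RE^d$, and using that $\Pi_j$ is by definition the projection of $E^{\lambda_c}(x_j)$ while scaling about the center commutes with vertical projection, I obtain
\[
    f_j(q) ~ \leq ~ \bigg(\frac{\beta \lambda_p \sqrt{d+1}}{\lambda_c}\bigg)^{\!2}.
\]
Since $\lambda_p$ is a free parameter of the construction, independent of $\lambda_c$, it can be chosen small enough to drive the right-hand side below $s_0$; observing that $\beta \to 3$ as $\lambda_p \to 0$, any $\lambda_p < \sqrt{s_0}\SP\lambda_c/(3\sqrt{d+1})$ suffices. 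This gives $\psi_j(q) > 1/4$ and hence $\Psi(q) > 1/4$, as required.

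The main obstacle is the first step: certifying that the packing neighbor $x_j$ produced by maximality corresponds to a \emph{top} ellipsoid at level 0, since only top ellipsoids appear in the partition of unity. This should follow from a packing argument in the vertical column over $q$ in the spirit of the proof of Lemma~\ref{lem:PU_depth}, identifying a specific $\tilde q$ whose Macbeath neighbors are forced to lie on top ellipsoids by the vertical ordering of intersections with the ray. Once that technicality is nailed down, the quantitative estimate on $f_j(q)$ closes the argument and pins down a concrete upper bound on $\lambda_p$.
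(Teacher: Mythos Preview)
Your approach is essentially the paper's: show that by choosing $\lambda_p$ small enough, every $q\in\Omega$ lies well inside some patch $\Pi_j$, so that a single $\psi_j(q)$ already exceeds $1/4$. The paper's proof is even terser than yours---it simply \emph{stipulates} that the construction of Section~\ref{sec:cover} is carried out with $\lambda_p$ chosen so that the half-scaled patches $\Pi_i^{1/2}$ cover $\Omega$, whence $f_i(x)\le 1/2$ for some $i$ and $\psi_i(x)\ge\mu(1/2)>0.26>1/4$. Your maximality-plus-expansion-containment derivation is precisely how one would justify that stipulation, and the top-ellipsoid subtlety you flag is real but is left equally implicit in the paper's one-line assertion.

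One minor slip: you wrote the mollifier as $e^{-1/(1-\sigma)}$, but the paper uses $\mu(\sigma)=e^{-1/(1-\sigma^2)}$ (Eq.~\eqref{eq:bump}). With the correct formula, $\mu(\sigma)=1/4$ at $\sigma_0=\sqrt{1-1/\ln 4}\approx 0.528$, not $0.279$; this only loosens the constraint you need on $\lambda_p$, so the argument goes through unchanged.
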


\begin{proof}
Recall that the patches $\{\Pi_i\}$ arise as the projections of the top ellipsoids covering the lifted polytope $\lift{\Omega}$. We build the data structure in Section~\ref{sec:cover} by adjusting the packing constant $\lambda_p$ such that the union of the smaller covering ellipsoids $\bigcup_i \Pi^{1/2}_i$ covers the original body. As a result, any point $x \in \Omega$ is covered by some ellipsoid $\Pi^{1/2}_i$. It follows that $f_i(x) \leq 1/2$, ensuring $\psi_i(x) \geq \mu(1/2)$. We conclude that $\Psi(x) = \sum_j \psi_j(x) \geq \psi_i(x) \geq \mu(1/2) > 0.26$, which clearly exceeds $1/4$.
\end{proof}

\subsection{Blending Gradients}

For a given function $\sigma: \RE^d \rightarrow \RE$, we have
\[ 
    \Gradient \mu(\sigma) 
        ~ = ~ \exp\left(\frac{1}{\sigma^2 - 1}\right)\cdot\Gradient\left(1+\frac{1}{\sigma^2-1}\right) 
        ~ = ~ \mu(\sigma)\cdot\frac{-2\sigma}{(\sigma^2-1)^2}\cdot\Gradient \sigma.
\]
From the above equations, this yields $\Gradient f_i(x) = 2\cdot M_i(x - c_i)$. For weight functions, we have
\begin{align}
    \Gradient \psi_i(x) 
        & ~ = ~ \Gradient \mu(f_i(x)) 
          ~ = ~ \psi_i(x) \cdot \frac{-2f_i(x)}{(f_i(x)^2-1)^2} \cdot 2 M_i (x - c_i) \nonumber \\
        & ~ = ~ -4 \cdot \frac{\psi_i(x) f_i(x)}{(f_i(x)^2-1)^2} \cdot M_i(x - c_i). \label{eq:grad_wi_2}
\end{align}

Towards deriving the desired bound on the gradient of the distance approximation $\|\Gradient \adm_{\Omega}(x)\|$, we start by bounding the gradient of the weight functions $\|\Gradient \psi_i\|$.

\begin{lemma} \label{lem:blending_gradient_bound}
The data structure described in Section~\ref{sec:medial-apx} guarantees that both $\|\Gradient \psi_i(x)\|$ and $\|\Gradient \phi_i(x)\|$ are $O(1/\eps)$.
\end{lemma}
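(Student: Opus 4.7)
The plan is to bound $\|\Gradient f_i(x)\|$ geometrically using the shape of the patch $\Pi_i$, then propagate the bound through the chain rule for $\psi_i = \mu \circ f_i$ and through the quotient rule for $\phi_i = \psi_i/\Psi$.

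First, I would translate Lemma~\ref{lem:ball-containment} into a spectral bound on the defining matrix $M_i$. Since $\Pi_i = \{x : (x-c_i)^{\intercal}M_i(x-c_i) \leq 1\}$ contains a Euclidean ball of radius $c\kern+1pt\eps$ at its center, the shortest semi-axis of $\Pi_i$ has length at least $c\kern+1pt\eps$, and therefore $\lambda_{\max}(M_i) \leq 1/(c\kern+1pt\eps)^2 = O(1/\eps^2)$. Writing $y = x - c_i$ and using the spectral decomposition of $M_i$ together with the fact that $f_i(x) \leq 1$ on $\Pi_i$, a short calculation gives
\[
    \|\Gradient f_i(x)\|^2
        ~ = ~ 4\,y^{\intercal} M_i^2 y
        ~ \leq ~ 4\,\lambda_{\max}(M_i) \cdot y^{\intercal} M_i y
        ~ \leq ~ 4\,\lambda_{\max}(M_i)
        ~ = ~ O(1/\eps^2),
\]
so $\|\Gradient f_i(x)\| = O(1/\eps)$.

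Second, I would apply the chain rule to $\psi_i = \mu \circ f_i$. From the explicit expression $\mu'(\sigma) = -2\sigma\,\mu(\sigma)/(\sigma^2-1)^2$ derived in the text, we have $\mu'(0) = 0$, and $\mu'(\sigma) \to 0$ as $|\sigma| \to 1^-$ because $\mu$ decays super-polynomially while $\sigma/(\sigma^2-1)^2$ grows only polynomially. Hence $\mu'$ extends continuously to all of $\RE$ and is bounded by some absolute constant on its support, yielding $\|\Gradient \psi_i(x)\| = |\mu'(f_i(x))| \cdot \|\Gradient f_i(x)\| = O(1/\eps)$.

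Third, for $\phi_i = \psi_i/\Psi$, Lemma~\ref{lem:positive_Psi} gives $\Psi(x) > 1/4$ throughout $\Omega$, while Lemma~\ref{lem:PU_depth} ensures that only $O(1)$ of the summands in $\Psi(x)$ (and hence in $\Gradient \Psi(x) = \sum_j \Gradient \psi_j(x)$) are non-zero at any $x$. Combined with $\psi_j(x) \leq \mu(0) = 1/e$, this gives $\Psi(x) = O(1)$ and $\|\Gradient \Psi(x)\| = O(1/\eps)$. The quotient rule then delivers
\[
    \|\Gradient \phi_i(x)\|
        ~ \leq ~ \frac{\Psi(x)\cdot\|\Gradient \psi_i(x)\| + \psi_i(x)\cdot\|\Gradient \Psi(x)\|}{\Psi(x)^2}
        ~ = ~ O(1/\eps).
\]
The main obstacle is the first step: converting the geometric ball-containment of Lemma~\ref{lem:ball-containment} into the tight spectral bound $\lambda_{\max}(M_i) = O(1/\eps^2)$. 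The $\eps$-gap between $\lift{\Omega}$ and $\lift{\Omega}^+$ is essential here, since without a lower bound on the thinnest direction of each patch, $\|M_i\|$ could be arbitrarily large and no $O(1/\eps)$ gradient bound would hold; everything downstream is routine calculus relying on the constant-depth and positive-denominator properties already established.
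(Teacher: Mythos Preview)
Your proof is correct and follows the same three-step outline as the paper: bound $\|M_i(x-c_i)\|$ via Lemma~\ref{lem:ball-containment}, multiply by the bounded scalar factor coming from the mollifier to obtain $\|\Gradient\psi_i\|$, then apply the quotient rule together with Lemmas~\ref{lem:positive_Psi} and~\ref{lem:PU_depth} for $\|\Gradient\phi_i\|$.

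The one substantive difference is in the first step. You bound
\[
    \|M_i y\|^2 ~=~ y^{\intercal}M_i^2 y ~\leq~ \lambda_{\max}(M_i)\cdot y^{\intercal}M_i y ~\leq~ \lambda_{\max}(M_i),
\]
using the support constraint $f_i(x)=y^{\intercal}M_i y\leq 1$, whereas the paper writes $\|M_i(x-c_i)\|\leq\|M_i\|\cdot\|x-c_i\|$ and invokes the unit-diameter assumption $\|x-c_i\|\leq 1$. Your route is the sharper one: since the diagonal entries of $M_i$ are $1/r_j^2 = O(1/\eps^2)$, the submultiplicative bound taken at face value yields only $\|M_i\|=O(1/\eps^2)$, and it is precisely the constraint $f_i(x)\leq 1$ that brings this down to $\|M_i(x-c_i)\|=O(1/\eps)$. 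The paper itself reverts to exactly this $O(1/r_{\min})$ reasoning in the subsequent Hessian analysis, so your argument is in fact the cleaner way to get the bound needed here.
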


\begin{proof}
We begin by simplifying Eq.~\eqref{eq:grad_wi_2}. Graphing the function $\kappa(\sigma) = (\mu(\sigma) \cdot \sigma)/(\sigma^2 - 1)^2$ over the range $\sigma \in [-1,+1]$ reveals that its absolute value is strictly less than $1/2$. Since $|f_i(x)| < 1$ throughout the range of interest, it follows that
\begin{equation} \label{eq:kappa}
 \left| \frac{\psi_i(x) f_i(x)}{(f_i(x)^2-1)^2} \right|
    ~ = ~ | \kappa(f_i(x)) |
    ~ < ~ \frac{1}{2}.
\end{equation}

Hence, by Eq.~\eqref{eq:grad_wi_2} we have
\begin{align}
 \|\Gradient \psi_i(x)\| 
    & ~ = ~ \left\|-4 \psi_i(x) \cdot \frac{f_i(x)}{(f_i(x)^2-1)^2} \cdot M_i(x - c_i)\right\|
      ~ = ~ 4 \SP \kappa(f_i(x)) \cdot \| M_i(x - c_i) \| \nonumber\\
    & ~ < ~ 2 \cdot \| M_i (x - c_i) \|, \label{eq:grad_psi}
\end{align}
We can bound $\|M_i (x - c_i)\|$ as follows. If we express the matrix $M_i$ in terms of an orthonormal basis whose coordinate vectors are aligned with the ellipsoid's major axes, $M_i$ is a diagonal matrix whose entries are of the form $1/r_j^2$, where the $r_j$'s are the ellipsoid's principal radii. By Lemma~\ref{lem:ball-containment}, each Macbeath ellipsoid contains a ball of radius $c \SP \eps$, for some constant $c$. Therefore, these diagonal entries are each at least $1/(c \SP \eps)^2$, which implies that its Frobenius norm is $\|M_i\| \leq \sqrt{d}/(c \SP \eps)$. Also, since $\Omega$ has unit diameter, $\|x - c_i\| \leq 1$, and hence $\|M_i (x - c_i) \| \leq \|M_i\| \cdot \| x - c_i \| = O(1/\eps)$. Therefore, $\|\Gradient \psi_i(x)\| = O(1/\eps)$, establishing the first part of the lemma.

To prove the bound on $\|\Gradient \phi_i(x)\|$, recall that $\phi_i(x) = \dfrac{\psi_i(x)}{\Psi(x)}$ and $\Psi(x) = \sum_i \psi_i(x)$. Differentiating, we obtain
\begin{align}
    \Gradient \phi_i(x)
      & ~ = ~ \frac{\Gradient \psi_i(x)}{\Psi(x)} - \frac{\psi_i(x)}{\Psi(x)^2}  \Gradient \Psi(x) 
        ~ = ~ \frac{\Gradient \psi_i(x)}{\Psi(x)} - \frac{\psi_i(x)}{\Psi(x)^2} \cdot \sum_j \Gradient \psi_j(x)  \label{eq:grad_phi} \\
      & ~ = ~ \frac{\Gradient \psi_i(x)}{\Psi(x)} - \frac{\phi_i(x)}{\Psi(x)}\cdot\sum_j \Gradient \psi_j(x). \nonumber
\end{align}
Using the bound on $\|\Gradient \psi_i(x)\|$ from above and the bound on $\Psi(x)$ from Lemma~\ref{lem:positive_Psi}, we proceed to bound $\|\phi_i(x)\|$ as follows
\begin{align*}
    \|\Gradient \phi_i(x)\|
      & ~ \leq ~ \frac{\|\Gradient \psi_i(x)\|}{\Psi(x)} + \frac{\phi_i(x)}{\Psi(x)}\cdot\sum_j \|\Gradient \psi_j(x)\| \\
      & ~ \leq ~ O\left(\frac{1}{\eps}\right) + 1 \cdot O(1) \cdot \depth{\Pi} \cdot O\left(\frac{1}{\eps}\right) 
        ~ \leq ~ O\left(\frac{1}{\eps}\right).
\end{align*}
This establishes the bound on $\|\Gradient \phi_i(x)\|$, completing the proof.
\end{proof}

\subsection{Smooth Distance Gradients} \label{sec:sDF_gradient}

In this section, we establish an explicit expression on the gradient of the smooth distance approximation, which can be easily evaluated in the course of answering distance-function queries, along with an upper bound on its gradient establishing the following lemma.

\begin{restatable}{lemma}{aDmGradient}\label{lem:sDF_gradient}
For the smooth boundary distance approximation encoded in the data structure described in Section~\ref{sec:medial-apx}, we have 
\[
    \Gradient \adm_{\Omega}(x)
        ~ = ~ -\sum_i \phi_i(x) \cdot a_i
          - \frac{4}{\Psi(x)} \sum_i (v_i(x) - \adm_{\Omega}(x)) \cdot\frac{\psi_i(x) f_i(x)}{(f_i(x)^2-1)^2} \cdot M_i(x-c_i).
\]
\end{restatable}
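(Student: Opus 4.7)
The plan is to derive the stated expression by a direct application of the product rule to $\adm_\Omega(x) = \sum_i \phi_i(x)\,v_i(x)$, followed by substitution of the gradient formulas already established earlier in the section. There is no new geometric content here, only bookkeeping, so the goal is to set up the algebra so that the $\adm_\Omega$ term emerges naturally from a ``telescoping''-style cancellation.

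First I would apply the product rule to obtain
\[
    \Gradient \adm_\Omega(x)
        ~ = ~ \sum_i \phi_i(x)\,\Gradient v_i(x) + \sum_i v_i(x)\,\Gradient \phi_i(x).
\]
Since $v_i(x) = b_i - a_i^{\Transpose}x$, we have $\Gradient v_i(x) = -a_i$, so the first sum is exactly $-\sum_i \phi_i(x)\,a_i$, which matches the first term on the right-hand side of the lemma. The work then reduces to showing that the second sum equals the remaining term.

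Next, I would substitute the expression for $\Gradient \phi_i(x)$ derived in Eq.~\eqref{eq:grad_phi}, namely $\Gradient \phi_i(x) = \Gradient \psi_i(x)/\Psi(x) - (\phi_i(x)/\Psi(x)) \sum_j \Gradient \psi_j(x)$, into $\sum_i v_i(x)\,\Gradient \phi_i(x)$. Pulling $1/\Psi(x)$ out in front and recognizing that $\sum_i \phi_i(x) v_i(x) = \adm_\Omega(x)$, the second piece collapses to $(\adm_\Omega(x)/\Psi(x)) \sum_j \Gradient \psi_j(x)$. Re-indexing and combining with the first piece (legitimate since $\adm_\Omega(x)$ does not depend on the summation index) yields
\[
    \sum_i v_i(x)\,\Gradient \phi_i(x)
        ~ = ~ \frac{1}{\Psi(x)} \sum_i \bigl(v_i(x) - \adm_\Omega(x)\bigr)\,\Gradient \psi_i(x).
\]

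Finally, I would substitute the closed form for $\Gradient \psi_i(x)$ from Eq.~\eqref{eq:grad_wi_2}, which contributes the factor $-4\cdot \psi_i(x) f_i(x)/(f_i(x)^2 - 1)^2 \cdot M_i(x-c_i)$. Combining this with the previous display gives precisely the second term in the statement of the lemma, completing the derivation. The only step requiring any care is the collapse step in the middle paragraph, where one must be careful to recognize that the weighted sum $\sum_i \phi_i(x) v_i(x)$ is exactly $\adm_\Omega(x)$ so that the $\adm_\Omega(x)$ can be factored across the summation index; everything else is routine substitution.
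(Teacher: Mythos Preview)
Your proposal is correct and follows essentially the same approach as the paper: product rule on $\adm_\Omega = \sum_i \phi_i v_i$, substitute $\Gradient v_i = -a_i$, substitute Eq.~\eqref{eq:grad_phi} for $\Gradient\phi_i$, collapse $\sum_i \phi_i v_i$ to $\adm_\Omega$ so that the $(v_i - \adm_\Omega)$ factor appears, and finally plug in Eq.~\eqref{eq:grad_wi_2}. The only difference is cosmetic ordering of the two summands.
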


\begin{proof}
Recalling that $\adm(x) = \sum_i \phi_i(x)v_i(x)$, where $\{\phi_i\}$ is the partition of unity and $\{v_i\}$ are the local approximations, we proceed to derive the gradient of $\adm(x)$ as follows.
\[
    \Gradient \adm_{\Omega}(x)
	~ = ~ \Gradient \left(\sum_i \phi_i(x) v_i(x)\right)
        ~ = ~ \sum_i \Gradient\phi_i(x)\cdot v_i(x) + \sum_i \phi_i(x)\cdot\Gradient v_i(x).
\]
For the first term, by Eq.~\eqref{eq:grad_phi} and the fact that $\adm_{\Omega}(x) = \sum_i (\psi_i/\Psi(x)) v_i(x)$ we have
\begin{align*}
    \sum_i \Gradient\phi_i(x) \cdot v_i(x)
	& ~ = ~ \sum_i \left(\frac{\Gradient \psi_i(x)}{\Psi(x)} - \frac{\psi_i(x)}{\Psi(x)^2} \cdot \sum_j \Gradient\psi_j(x) \right) v_i(x) \\
	& ~ = ~ \sum_i \frac{v_i(x)}{\Psi(x)} \Gradient \psi_i(x) - \sum_j \left(\sum_i \frac{\psi_i(x)}{\Psi(x)} v_i(x)\right) \frac{\Gradient\psi_j(x)}{\Psi(x)} \\
	& ~ = ~ \sum_i v_i(x) \frac{\Gradient \psi_i(x)}{\Psi(x)} - \sum_j \adm_{\Omega}(x) \frac{\Gradient\psi_j(x)}{\Psi(x)} \\
	& ~ = ~ \sum_i (v_i(x)-\adm_{\Omega}(x)) \frac{\Gradient\psi_i(x)}{\Psi(x)}. 
\end{align*}
Combining this with Eq.~\eqref{eq:grad_wi_2} yields
\[
    \sum_i \Gradient\phi_i(x) \cdot v_i(x)
	~ = ~ -\frac{4}{\Psi(x)} \sum_i (v_i(x)-\adm_{\Omega}(x)) \cdot \frac{\psi_i(x) f_i(x)}{(f_i(x)^2-1)^2} \cdot M_i(x - c_i),
\]
which matches the first term. For the second term, recall that $v_i(x) = b_i - a_i^{\Transpose} x$, which yields
\[
    \sum_i \phi_i(x) \cdot \Gradient v_i(x)
        ~ = ~ - \sum_i \phi_i(x) \cdot a_i
\]
and completes the proof.
\end{proof}

While we can bound the gradient norm through an analysis of the above expression, there is a simpler analysis using the bounds on the distance approximation.

\begin{restatable}{lemma}{aDmGradBound}\label{lem:sDF_gradient_bound}
For the smooth boundary distance approximation encoded in the data structure described in Section~\ref{sec:medial-apx} for all $x \in \RE^d$, the gradient satisfies $\|\Gradient \adm_{\Omega}(x)\| = O(1)$.
\end{restatable}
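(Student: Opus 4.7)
The plan is to start from the explicit expression for $\Gradient \adm_{\Omega}(x)$ given in Lemma~\ref{lem:sDF_gradient} and bound its two summands separately. The first summand, $-\sum_i \phi_i(x) \cdot a_i$, has norm at most $\sum_i \phi_i(x) \, \|a_i\| = 1$, since the $\phi_i$'s form a partition of unity and each $a_i$ is a unit outward normal. This contributes $O(1)$ immediately.

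The second summand is the delicate one. A naive bound using $|v_i(x)| = O(1)$ and $\|\Gradient \phi_i(x)\| = O(1/\eps)$ from Lemma~\ref{lem:blending_gradient_bound} would give only $O(1/\eps)$, which is too weak. The key observation (this is the ``bound on the distance approximation'' alluded to in the text) is that within any patch $\Pi_i$ containing $x$, both $v_i(x)$ and $\adm_{\Omega}(x)$ are absolute $\eps$-approximations of $\dm_{\Omega}(x)$: the first by Lemma~\ref{lem:lift-summary}, since $v_i$ is the distance to the representative facet associated with a top ellipsoid, and the second by the fact that $\adm_{\Omega}$ is a convex combination of such local approximations. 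Hence by the triangle inequality $|v_i(x) - \adm_{\Omega}(x)| = O(\eps)$ for every $i$ in the sum.

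Plugging this in, each term of the second summand is the product of (i) $|v_i(x) - \adm_{\Omega}(x)| = O(\eps)$; (ii) the factor $|\psi_i(x) f_i(x)/(f_i(x)^2 - 1)^2| < 1/2$ from Eq.~\eqref{eq:kappa}; and (iii) $\|M_i(x - c_i)\| = O(1/\eps)$, which was established inside the proof of Lemma~\ref{lem:blending_gradient_bound} via Lemma~\ref{lem:ball-containment} on the ball contained in each covering ellipsoid. The global prefactor $1/\Psi(x)$ is $O(1)$ by Lemma~\ref{lem:positive_Psi}. Each term is therefore $O(\eps) \cdot O(1) \cdot O(1/\eps) = O(1)$, and summing over the $O(1)$ patches that contain $x$ (Lemma~\ref{lem:PU_depth}) yields $O(1)$ overall.

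The main conceptual obstacle is recognizing the cancellation that makes this work: because $\sum_i \phi_i \equiv 1$, we have $\sum_i \Gradient \phi_i(x) = 0$, so the potentially divergent quantity $\sum_i v_i(x) \Gradient \phi_i(x)$ is really $\sum_i (v_i(x) - \adm_{\Omega}(x)) \Gradient \phi_i(x)$. This cancellation is already built into the expression of Lemma~\ref{lem:sDF_gradient}, which is why the explicit form derived there is tailor-made for the argument above; without replacing $v_i$ by the $O(\eps)$-small difference $v_i - \adm_{\Omega}$, the gradient bound would not follow from the per-patch estimates. All other ingredients are routine substitutions from results already established.
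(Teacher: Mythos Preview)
Your proposal is correct and follows essentially the same approach as the paper: both exploit the partition-of-unity cancellation to replace each $v_i$ by an $O(\eps)$ quantity (you use $v_i-\adm_\Omega$, the paper uses $v_i-\dm_\Omega$), then pair this against an $O(1/\eps)$ gradient-type bound and sum over $O(1)$ patches. The only cosmetic difference is that you route through the explicit formula of Lemma~\ref{lem:sDF_gradient} and bound $\|M_i(x-c_i)\|$ directly, whereas the paper argues from the product rule and invokes the black-box estimate $\|\Gradient\phi_i(x)\|=O(1/\eps)$ from Lemma~\ref{lem:blending_gradient_bound}.
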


\begin{proof}
A key consequence of the partition of unity construction is that the gradients of the normalized weight functions cancel:
\[
    \sum_i \phi_i(x) = 1 
        ~ \implies ~ \sum_i \Gradient \phi_i(x) = 0.
\]
In addition, the absolute error bound required on each local approximation implies that
\[
    v_i(x) 
        ~ = ~ \dm_{\Omega}(x) + \eps_i(x), \qquad \text{where $0 ~\leq~ \eps_i(x) ~\leq~ \eps$, for all $i$.}
\]
Recalling the definition $\adm_{\Omega}(x) = \sum_i \phi_i(x) \cdot v_i(x)$ and by differentiating, we obtain
\[
    \Gradient \adm_{\Omega}(x) ~ = ~ \sum_i v_i(x) \Gradient \phi_i(x) + \sum_i \phi_i(x) \Gradient v_i(x).
\]
We can simplify the first summation by using the cancellation of the weight-function gradients.
\begin{align*}
    \sum_i v_i(x) \Gradient \phi_i(x)
        & ~ = ~ \sum_i \dm_{\Omega}(x) \Gradient \phi_i(x) + \sum_i \eps_i(x) \Gradient \phi_i(x)  \\
        & ~ = ~ \dm_{\Omega}(x) \left(\sum_i \Gradient \phi_i(x)\right) + \sum_i \eps_i(x) \Gradient \phi_i(x)
          ~ = ~ \sum_i \eps_i(x) \Gradient \phi_i(x)
\end{align*}
Recalling that the summation consists of at most $\depth{\Pi}$ non-zero terms and with the aid of Lemma~\ref{lem:blending_gradient_bound}, we can bound the magnitude of the gradient as
\begin{align*}
    \|\Gradient \adm_{\Omega}(x)\| 
        & ~ \leq ~ \sum_i \eps_i(x)\cdot\|\Gradient \phi_i(x)\| + \sum_i \phi_i(x) \cdot \|\Gradient v_i(x)\| \\
        & ~ \leq ~ \depth{\Pi} \cdot \left( \eps \cdot \max_i \|\Gradient \phi_i(x)\| + 1\right) \tag{$\eps_i(x) \leq \eps$ and $\|\Gradient v_i(x)\| = \|a_i\| = 1$}\\
        & ~ \leq ~ \depth{\Pi} \cdot \left( \eps \cdot O\left( \frac{1}{\eps} \right) + 1\right) \tag{Lemma~\ref{lem:blending_gradient_bound}} \\ 
        & ~ = ~ O(\depth{\Pi}).
\end{align*}
By Lemma~\ref{lem:PU_depth}, $\depth{\Pi}$ is $O(1)$, so this is $O(1)$, as desired.
\end{proof}

\subsection{Smooth Distance Hessian}  \label{sec:sDF_hessian_bound}

Next, we establish a bound on the norm of the Hessian of the smooth distance approximation, denoted $\Hess \adm_{\Omega}(x)$.

\begin{restatable}{lemma}{aDmHessBound} \label{lem:sDF_hessian_bound}
For the smooth boundary distance approximation encoded in the data structure described in Section~\ref{sec:medial-apx}, for all $x \in \RE^d$, the Hessian satisfies $\|\Hess \adm_{\Omega}(x)\| = O(1/\eps)$.
\end{restatable}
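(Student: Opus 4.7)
The plan is to differentiate $\adm_{\Omega}(x) = \sum_i \phi_i(x) \cdot v_i(x)$ twice, and then combine the pointwise bounds already in hand (on $\phi_i$, $\psi_i$, and their first derivatives) with cancellation identities arising from $\sum_i \phi_i \equiv 1$. Because each $v_i$ is affine, $\Hess v_i = 0$, so the product rule yields
\[
    \Hess \adm_{\Omega}(x)
        ~ = ~ \sum_i v_i(x)\,\Hess \phi_i(x) ~+~ \sum_i \bigl(\Gradient \phi_i(x)\,\Gradient v_i(x)^{\Transpose} + \Gradient v_i(x)\,\Gradient \phi_i(x)^{\Transpose}\bigr).
\]
The cross-term sum is immediate: $\Gradient v_i = -a_i$ has unit norm, $\|\Gradient \phi_i\| = O(1/\eps)$ by Lemma~\ref{lem:blending_gradient_bound}, and by Lemma~\ref{lem:PU_depth} only $O(1)$ indices contribute at any $x$, giving a contribution of $O(1/\eps)$.

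The first sum is trickier, since a term-by-term bound $|v_i| \cdot \|\Hess \phi_i\| = O(1) \cdot O(1/\eps^2)$ would only give $O(1/\eps^2)$. I would apply the same cancellation trick used in Lemma~\ref{lem:sDF_gradient_bound}: differentiating $\sum_i \phi_i \equiv 1$ twice gives $\sum_i \Hess \phi_i = 0$, and writing $v_i(x) = \dm_{\Omega}(x) + \eps_i(x)$ with $0 \leq \eps_i(x) \leq \eps$ collapses the first sum to $\sum_i \eps_i(x)\,\Hess \phi_i(x)$. Its norm is therefore at most $\eps \cdot \depth{\Pi} \cdot \max_i \|\Hess \phi_i(x)\|$, so the whole lemma reduces to proving $\|\Hess \phi_i(x)\| = O(1/\eps^2)$.

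The main obstacle is exactly this Hessian bound on $\phi_i = \psi_i/\Psi$. I would compute it by a second application of the quotient rule, giving a sum of terms involving $\Hess \psi_i$, $\Hess \Psi$, and outer products of their gradients, all divided by powers of $\Psi$. For $\psi_i(x) = \mu(f_i(x))$ with $f_i(x) = (x-c_i)^{\Transpose} M_i (x-c_i)$, $\Gradient f_i = 2 M_i(x-c_i)$ and $\Hess f_i = 2 M_i$, so the chain rule gives
\[
    \Hess \psi_i(x)
        ~ = ~ \mu''(f_i(x))\cdot \bigl(2 M_i(x-c_i)\bigr)\bigl(2 M_i(x-c_i)\bigr)^{\Transpose} + \mu'(f_i(x)) \cdot 2 M_i.
\]
Since $\mu'$ and $\mu''$ are bounded on $[-1,1]$ (they can be analyzed just like the function $\kappa$ in Eq.~\eqref{eq:kappa}), $\|M_i\| = O(1/\eps)$ (the Frobenius norm bound used in the proof of Lemma~\ref{lem:blending_gradient_bound}, which comes from Lemma~\ref{lem:ball-containment} and John's theorem), and $\|x-c_i\| \leq \diam(\Omega) = O(1)$, each term is $O(1/\eps^2)$. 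Summing the $\depth{\Pi} = O(1)$ active terms transfers the same bound to $\Hess \Psi$.

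Finally, assembling the quotient expansion of $\Hess \phi_i$ and bounding each of its four terms using $\|\Hess \psi_i\|,\|\Hess \Psi\| = O(1/\eps^2)$, $\|\Gradient \psi_i\|,\|\Gradient \Psi\| = O(1/\eps)$, $|\psi_i|\leq 1$, and the uniform lower bound $\Psi(x) > 1/4$ from Lemma~\ref{lem:positive_Psi}, yields $\|\Hess \phi_i(x)\| = O(1/\eps^2)$. Plugged back into the reduction above, this gives $\|\Hess \adm_{\Omega}(x)\| \leq O(1/\eps) + \eps \cdot O(1) \cdot O(1/\eps^2) = O(1/\eps)$, as required.
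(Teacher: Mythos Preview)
Your argument is correct and takes a genuinely different route from the paper's. The paper does not differentiate $\sum_i \phi_i v_i$ directly; instead it starts from the explicit gradient formula of Lemma~\ref{lem:sDF_gradient}, rewrites it as $\Gradient\adm_\Omega = -\sum_i \phi_i a_i - \sum_i F_i\, M_i(x-c_i)$ with $F_i = \frac{4}{\Psi}(v_i-\adm_\Omega)\,\kappa(f_i)$, and then bounds the Hessian via directional derivatives of this expression. The cancellation you exploit through $\sum_i \Hess\phi_i = 0$ is packaged in the paper as the factor $(v_i-\adm_\Omega)=O(\eps)$ inside $F_i$; the paper then needs a separate technical lemma (Lemma~\ref{lem:F_bounds}) establishing $|F_i|=O(\eps)$ and $\|\Gradient F_i\|=O(1)$, whose proof is a somewhat lengthy direct differentiation. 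Your approach is more symmetric with the first-order argument of Lemma~\ref{lem:sDF_gradient_bound} and avoids the auxiliary $F_i$ entirely, trading it for the second-order quotient rule and the bound $\|\Hess\phi_i\|=O(1/\eps^2)$, which is a clean modular step. One small correction: the bound $\|M_i\|=O(1/\eps)$ you cite (following the paper's wording in the proof of Lemma~\ref{lem:blending_gradient_bound}) should in fact be $\|M_i\|=O(1/\eps^2)$, since the eigenvalues of $M_i$ are $1/r_j^2 \le 1/(c\eps)^2$; this is exactly what you need for the $\mu'(f_i)\cdot 2M_i$ term, so your conclusion $\|\Hess\psi_i\|=O(1/\eps^2)$ stands unchanged.
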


\begin{proof}
Expressing the Hessian in terms of the directional second derivative, we have
\[
    \| \Hess \adm_{\Omega}(x) \| 
        ~ = ~ \max_{\|v\| = \|u\| = 1} | \Gradient_v \Gradient_u \adm_{\Omega}(x)|,
\]
where $\Gradient_u$ and $\Gradient_v$ denote the directional derivatives in the directions of unit vectors $u$ and $v$, respectively. Using the substitution $\Gradient_u \adm_{\Omega}(x) = \langle u, \Gradient \adm_{\Omega}(x) \rangle$, we rewrite the above as
\begin{align*}
    \| \Hess \adm_{\Omega}(x) \| 
        & ~ = ~ \max_{\|v\| = 1} \max_{\|u\| = 1} | \Gradient_v \langle u, \Gradient \adm_{\Omega}(x) \rangle| \\
        & ~ = ~ \max_{\|v\| = 1} \max_{\|u\|=1} \lim_{\delta \to 0} \frac{|\langle u,  \Gradient\adm_{\Omega}(x + \delta v) - \Gradient\adm_{\Omega}(x)\rangle|}{\delta} \\
        & ~ \leq ~ \max_{\|v\| = 1}\lim_{\delta \to 0} \frac{\|\Gradient\adm_{\Omega}(x + \delta v) - \Gradient\adm_{\Omega}(x) \|}{\delta}.
\end{align*}

To simplify the notation below, we define the following function in order to collect some common terms (which we will further analyze below in Lemma~\ref{lem:F_bounds}).
\begin{equation} \label{eq:F}
    F_i(x) 
        ~ = ~ \frac{4}{\Psi(x)} \cdot (v_i(x) - \adm_{\Omega}(x)) \cdot \frac{\psi_i(x) f_i(x)}{(f_i(x)^2-1)^2}.
\end{equation}

We use the simplified notation to rewrite the explicit gradient from Lemma~\ref{lem:sDF_gradient} as a linear combination of vectors of the following form
\[
    \Gradient \adm_{\Omega}(x) 
        ~ = ~ -\sum_i \phi_i(x) \cdot a_i - \sum_i F_i(x)\cdot M_i(x-c_i).
\]
Applying the definition of the directional derivative, we have
\begin{align*}
    \| \Hess \adm_{\Omega}(x) \| 
        & ~ = ~ \max_{\|v\| = 1} \lim_{\delta \to 0} \frac{\|\Gradient\adm_{\Omega}(x + \delta v) - \Gradient\adm_{\Omega}(x)\|}{\delta} \\
        & ~ \leq ~ \max_{\|v\| = 1} \sum_i \lim_{\delta \to 0} \frac{\|\phi_i(x + \delta v) \cdot a_i - \phi_i(x) \cdot a_i\|}{\delta} \\
        &\qquad\qquad + \sum_i \lim_{\delta \to 0} \frac{\|(F_i(x + \delta v)\cdot M_i (x + \delta v - c_i) - F_i(x) \cdot M_i (x - c_i)\|}{\delta} \\
        & ~ = ~ \max_{\|v\| = 1} \sum_i \lim_{\delta \to 0} \frac{\|(\phi_i(x + \delta v) - \phi_i(x)) \cdot a_i\|}{\delta} \\
        &\qquad + \sum_i \lim_{\delta \to 0} \frac{\|(F_i(x + \delta v) - F_i(x))\cdot M_i (x - c_i) + F_i(x + \delta v) \cdot  M_i (\delta v)\|}{\delta}.
\end{align*}
By straightforward applications of vector norm inequalities and applying the definition of the gradient for $\phi_i$ and $F_i$, we obtain
\begin{align*}
    \| \Hess \adm_{\Omega}(x) \| 
        & ~ \leq ~ \sum_i \| \Gradient \phi_i \| \cdot \|a_i\| + \sum_i \|M_i(x - c_i)\| \cdot \| \Gradient F_i(x) \| \\
        &\qquad + \sum_i \max_{\|v\| = 1}\lim_{\delta\to 0} |F_i(x + \delta v)| \cdot \frac{\|M_i(\delta v)\|}{\delta}.
\end{align*}

Recall that the sum is taken over all overlapping patches of $\Pi$ at $x$, and by Lemma~\ref{lem:PU_depth}, the maximum degree of overlap, denoted $\depth{\Pi}$, is $O(1)$. Also recall that $a_i$ is a unit vector and from Lemma~\ref{lem:blending_gradient_bound} that $\|\Gradient \phi_i(x)\| = O(1/\eps)$. 

In the proof of that lemma, we noted that the terms $\{\|M_i(x - c_i)\|\}$ are upper bounded by $O(1/r_{\min}(M_i))$, where $r_{\min}(M_i)$ is a lower bound on the principal radii of the ellipsoid represented by $M_i$. As shown earlier, $r_{\min}(M_i) \geq c \cdot \eps$ for some constant $c$, and therefore $\|M_i(x - c_i)\| = O(1/\eps)$. It follows from this as well that $\lim_{\delta \to 0} \|M_i(\delta v)\| / \delta$ is bounded above by $1/r^2_{\min}(M_i) \leq O(1/\eps^2)$. As we show below in Lemma~\ref{lem:F_bounds}, $|F_i(x)| = O(\eps)$. Combining these observations, we have
\begin{align*}
    \| \Hess \adm_{\Omega}(x) \| 
        & ~ \leq ~ \sum_i O\bigg( \frac{1}{\eps} \bigg) + \sum_i O\bigg( \frac{1}{\eps} \bigg) \cdot\|\Gradient F_i(x)\| + \sum_i O\bigg( \frac{1}{\eps^2} \bigg) \cdot |F_i(x)| \\
        & ~ =    ~ O\left( \frac{1}{\eps} + \frac{1}{\eps} \cdot \|\Gradient F_i(x)\| + \frac{1}{\eps} \right).
\end{align*}
Per Lemma~\ref{lem:F_bounds} below, $\|\Gradient F_i(x)\| = O(1)$, implying that $\| \Hess \adm_{\Omega}(x) \| = O(1/\eps)$, as desired.
\end{proof}

To finish the analysis, we present the proof of Lemma~\ref{lem:F_bounds}, establishing upper bounds on the magnitudes of both the function value and the gradient of $F_i$.

\begin{lemma} \label{lem:F_bounds}
For any patch $\Pi_i$ and any $x \in \Pi_i$, $|F_i(x)| = O(\eps)$ and $\|\Gradient F_i(x)\| = O(1)$.
\end{lemma}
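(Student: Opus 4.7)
The plan is to write $F_i$ as a product of three factors, $F_i = (4/\Psi) \cdot (v_i - \adm_{\Omega}) \cdot \kappa(f_i)$ where $\kappa(\sigma) = \mu(\sigma)\sigma/(\sigma^2-1)^2$, bound each factor together with its gradient, and apply the product rule. The observation driving everything is that $|v_i(x) - \adm_{\Omega}(x)| = O(\eps)$: both $v_i(x)$ and $\adm_{\Omega}(x)$ (being a convex combination of local approximations) are absolute $\eps$-approximations of $\dm_{\Omega}(x)$, so their difference is at most $2\eps$. This is the crucial $\eps$-factor that cancels the $1/\eps$ blowups in the derivatives.

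The magnitude bound $|F_i(x)| = O(\eps)$ is then immediate: $1/\Psi(x) < 4$ by Lemma~\ref{lem:positive_Psi}, $|\kappa(f_i(x))| < 1/2$ by Eq.~\eqref{eq:kappa}, and $|v_i - \adm_\Omega| = O(\eps)$ as above, so the product is $O(\eps)$.

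For the gradient, I would expand $\Gradient F_i$ by the product rule into three terms:
\[
    \Gradient F_i ~ = ~ -\frac{4(v_i - \adm_\Omega)\kappa(f_i)}{\Psi^2} \Gradient \Psi ~+~ \frac{4\kappa(f_i)}{\Psi}(\Gradient v_i - \Gradient \adm_\Omega) ~+~ \frac{4(v_i - \adm_\Omega)}{\Psi}\kappa'(f_i)\Gradient f_i,
\]
and show each of the three is $O(1)$. The first uses $\|\Gradient \Psi(x)\| \leq \depth{\Pi}\max_j\|\Gradient\psi_j(x)\| = O(1/\eps)$ from Lemmas~\ref{lem:PU_depth} and~\ref{lem:blending_gradient_bound}, paired with the $O(\eps)$ factor $v_i - \adm_\Omega$. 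The second uses $\|\Gradient v_i\| = \|a_i\| = 1$ and $\|\Gradient \adm_\Omega\| = O(1)$ from Lemma~\ref{lem:sDF_gradient_bound}; the remaining factors are bounded constants. The third uses $\|\Gradient f_i(x)\| = 2\|M_i(x-c_i)\| = O(1/\eps)$ established during the proof of Lemma~\ref{lem:blending_gradient_bound}, combined with the boundedness of $\kappa'$ on $(-1,1)$: differentiating $\mu$ directly gives the identity $\kappa(\sigma) = -\mu'(\sigma)/2$, so $\kappa' = -\mu''/2$, which is continuous and compactly supported on $[-1,1]$. Multiplying with the $O(\eps)$ factor once more yields $O(1)$.

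The main obstacle is precisely this cancellation: two of the three terms involve derivative factors that individually blow up as $1/\eps$, and only the $O(\eps)$ bound on $v_i - \adm_\Omega$ brings them back to $O(1)$. Obtaining $|v_i - \adm_\Omega| = O(\eps)$ in turn rests on the convex-combination structure of the blend, which is the same structural fact that powered the gradient cancellation in Lemma~\ref{lem:sDF_gradient_bound}. A secondary technical point is verifying that $\kappa'$ is bounded where we evaluate it; the shortcut via $\kappa = -\mu'/2$ avoids having to differentiate the messy expression $\mu(\sigma)\sigma/(\sigma^2-1)^2$ directly and turns this into a routine consequence of $\mu \in C_c^\infty(\RE)$.
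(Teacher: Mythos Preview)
Your proposal is correct and follows essentially the same route as the paper: write $F_i$ as a triple product, expand $\Gradient F_i$ by the product rule into three terms, and bound each by pairing the $O(\eps)$ factor $|v_i - \adm_\Omega|$ against the $O(1/\eps)$ derivative factors (using Lemmas~\ref{lem:positive_Psi}, \ref{lem:blending_gradient_bound}, and~\ref{lem:sDF_gradient_bound} exactly as you indicate). The only difference is in the third term: the paper differentiates $\psi_i(x) f_i(x)/(f_i(x)^2-1)^2$ explicitly and bounds the two resulting rational expressions by inspection, whereas your identity $\kappa = -\mu'/2$ reduces this step to the boundedness of $\mu''$, which follows immediately from $\mu \in C_c^\infty(\RE)$ --- a tidier shortcut yielding the same conclusion.
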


\begin{proof} Recall the definition from Equation \eqref{eq:F}
\[
    F_i(x) 
        ~ = ~ \frac{4}{\Psi(x)}\cdot\bigg(v(x) - \adm_{\Omega}(x)\bigg)\cdot\frac{\psi_i(x) f_i(x)}{(f_i(x)^2 - 1)^2}.
\]
The bound on the absolute function values follows from the definitions of $v_i$ and $\adm$, both being valid absolute $\eps$-approximations, together with Lemma~\ref{lem:positive_Psi} bounding $\Psi(x)$ from below and Eq.~\eqref{eq:kappa} bounding the last coefficient. Specifically,
\[
    |F_i(x)| 
        ~ \leq ~ \frac{4}{1/4} \cdot \eps \cdot \frac{1}{2} 
        ~ =    ~ O(\eps).
\]
For the gradient bound, we start by taking derivatives.
\begin{align*}
    \Gradient F_i(x)
        & ~ = ~ \bigg(\frac{-4}{\Psi^2(x)}\cdot\sum_j \Gradient \psi_j(x) \bigg)\cdot\bigg(v(x) -     \adm_{\Omega}(x)\bigg)\cdot\frac{\psi_i(x)\cdot f_i(x)}{(f_i(x)^2 - 1)^2} \\
        & ~~ +\frac{4}{\Psi(x)}\cdot\bigg(-a_i - \Gradient\adm_{\Omega}(x)\bigg)\cdot\frac{\psi_i(x)\cdot f_i(x)}{(f_i(x)^2 - 1)^2} \\
        & ~~ + 4 \frac{v(x) - \adm_{\Omega}(x)}{\Psi(x)} \bigg[\frac{\psi_i(x)\cdot (1+3f_i(x)^2)}{(1 - f_i(x)^2)^3} \Gradient \kern-1pt f_i(x) + 4 \frac{\psi_i(x)\cdot f_i(x)^2}{(f_i(x)^2-1)^4} \cdot M_i(x - c_i)\bigg].
\end{align*}
Finally, we substitute the bounds on coefficients and approximation errors. We also apply the bounds from Lemmas~\ref{lem:blending_gradient_bound} and \ref{lem:sDF_gradient_bound}, along with the following simplifying bounds similar to Equation~\eqref{eq:kappa}.
\begin{align*}
    \left| \frac{\psi_i(x)\cdot (1 + 3 f_i(x)^2)}{(1 - f_i(x)^2)^3} \right| 
        & ~ \leq ~ \left| \frac{\mu(f_i(x))\cdot (1 + 3 f_i(x)^2)}{(1 - f_i(x)^2)^3} \right| 
          ~ < ~ 5. \\
    \left| \frac{\psi_i(x)\cdot f_i(x)^2}{(f_i(x)^2-1)^4} \right| 
        & ~ \leq ~ \left| \frac{\mu(f_i(x))\cdot f_i(x)^2}{(f_i(x)^2-1)^4} \right| 
          ~ < ~ 4.
\end{align*}
Together, this yields
\begin{align*}
    \|\Gradient F_i(x) \|
        & ~ \leq ~ O(1)\cdot O(\max_j \|\Gradient \psi_j(x)\|)\cdot \eps\cdot\frac{1}{2} + O(1)\cdot (\|a_i\| + \|\Gradient\adm_{\Omega}(x)\|) \cdot\frac{1}{2} \\
        & \qquad + O(\eps)\cdot\bigg[5\cdot 2\cdot \|M_i(x - c_i)\| + 4 \cdot 4 \cdot \| M_i(x - c_i) \| \bigg] \\
        & ~ \leq ~ \eps\cdot O\left(\frac{1}{\eps}\right) + O(1) + \eps\cdot O\left(\frac{1}{\eps}\right) \\
        & ~ = ~ O(1),
\end{align*}
as desired.
\end{proof}

\section{Concluding Remarks} \label{sec:conclusion}

In this paper, we have taken first steps towards designing data structures for approximately answering geometric distance queries approximately, while more faithfully preserving properties of the underlying distance functions. Existing data structures based on computing approximate nearest neighbors suffer from discontinuities in the resulting distance field, which is undesirable in many applications. We have presented a general method for achieving smoothness by combining a traditional (discontinuous) method with blending,
and we have illustrated the technique in the concrete application of approximating (in terms of absolute errors) the distance field to the boundary, induced within a convex polytope $\Omega$ in $\RE^d$. Our data structure is efficient in the sense that it nearly matches the best asymptotic space and time bounds for the simpler problem of approximately determining membership within the polytope (being suboptimal by a factor of $1/\sqrt{\eps}$ in the space). We have also presented bounds on the norms of the gradient (first derivative) and Hessian (second derivative) of the approximation.

There are a number of interesting open problems that remain. The first is applying this method to more approximate nearest neighbor search applications. We have done this for a discrete set of points in $\RE^d$, which we plan to publish in a future paper. While our results nearly matching the best known complexity bounds for $\eps$-approximate nearest neighbor searching, the technical issues are quite involved. The method can be applied to other query problems where the answer is naturally associated with a continuous field. Examples include penetration depth in collision detection~\cite{ZHANG20143}, distance oracles in robotics and autonomous navigation \cite{WWL22}, and novel-view synthesis using parametric radiance fields~\cite{Fridovich-Keil_2022_CVPR}.

While our approach produces a smooth approximation, there are other properties of distance fields that would be useful to preserve. One shortcoming of our method is that it can produce spurious local minima in the approximate distance field. An interesting question is whether our approach can be modified to eliminate these minima. We anticipate interesting connections to the literature on vector field design~\cite{vaxman2016directional} and mode finding~\cite{LLM21,EFR12}.



\begin{thebibliography}{10}

\bibitem{AbM18}
A.~Abdelkader and D.~M. Mount.
\newblock Economical {Delone} sets for approximating convex bodies.
\newblock In {\em Proc.\ 16th Scand.\ Workshop Algorithm Theory}, pages
  4:1--4:12, 2018.
\newblock \href {https://doi.org/10.4230/LIPIcs.SWAT.2018.4}
  {\path{doi:10.4230/LIPIcs.SWAT.2018.4}}.

\bibitem{al2013continuous}
R.~Al-Aifari, I.~Daubechies, and Y.~Lipman.
\newblock Continuous procrustes distance between two surfaces.
\newblock {\em Commun.\ Pure and Appl.\ Math.}, 66:934--964, 2013.
\newblock \href {https://doi.org/10.1002/cpa.21444}
  {\path{doi:10.1002/cpa.21444}}.

\bibitem{Amenta1999}
N.~Amenta and M.~Bern.
\newblock Surface reconstruction by voronoi filtering.
\newblock {\em Discrete Comput.\ Geom.}, 22:481--504, 1999.
\newblock \href {https://doi.org/10.1007/PL00009475}
  {\path{doi:10.1007/PL00009475}}.

\bibitem{AAFM22}
R.~Arya, S.~Arya, G.~D. da~Fonseca, and D.~M. Mount.
\newblock Optimal bound on the combinatorial complexity of approximating
  polytopes.
\newblock {\em ACM Trans.\ Algorithms}, 18:1--29, 2022.
\newblock \href {https://doi.org/10.1145/3559106} {\path{doi:10.1145/3559106}}.

\bibitem{AFM17b}
S.~Arya, G.~D. da~Fonseca, and D.~M. Mount.
\newblock Near-optimal $\varepsilon$-kernel construction and related problems.
\newblock In {\em Proc.\ 33rd Internat.\ Sympos.\ Comput.\ Geom.}, pages
  10:1--15, 2017.
\newblock URL: \url{https://arxiv.org/abs/1703.10868}, \href
  {https://doi.org/10.4230/LIPIcs.SoCG.2017.10}
  {\path{doi:10.4230/LIPIcs.SoCG.2017.10}}.

\bibitem{AFM17c}
S.~Arya, G.~D. da~Fonseca, and D.~M. Mount.
\newblock On the combinatorial complexity of approximating polytopes.
\newblock {\em Discrete Comput.\ Geom.}, 58(4):849--870, 2017.
\newblock \href {https://doi.org/10.1007/s00454-016-9856-5}
  {\path{doi:10.1007/s00454-016-9856-5}}.

\bibitem{AFM17a}
S.~Arya, G.~D. da~Fonseca, and D.~M. Mount.
\newblock Optimal approximate polytope membership.
\newblock In {\em Proc.\ 28th Annu.\ ACM-SIAM Sympos.\ Discrete Algorithms},
  pages 270--288, 2017.
\newblock \href {https://doi.org/10.1137/1.9781611974782.18}
  {\path{doi:10.1137/1.9781611974782.18}}.

\bibitem{AFM23}
S.~Arya, G.~D. da~Fonseca, and D.~M. Mount.
\newblock Economical convex coverings and applications.
\newblock In {\em Proc.\ 34th Annu.\ ACM-SIAM Sympos.\ Discrete Algorithms},
  pages 1834--1861, 2023.
\newblock \href {https://doi.org/10.1137/1.9781611977554.ch70}
  {\path{doi:10.1137/1.9781611977554.ch70}}.

\bibitem{AMNSW98}
S.~Arya, D.~M. Mount, N.~S. Netanyahu, R.~Silverman, and A.~Wu.
\newblock An optimal algorithm for approximate nearest neighbor searching.
\newblock {\em J.\ Assoc.\ Comput.\ Mach.}, 45(6):891--923, 1998.
\newblock \href {https://doi.org/10.1145/293347.293348}
  {\path{doi:10.1145/293347.293348}}.

\bibitem{Aurenhammer:2013:VDD}
F.~Aurenhammer, R.~Klein, and D.-T. Lee.
\newblock {\em Voronoi Diagrams and {Delaunay} Triangulations}.
\newblock World Scientific Publishing Co., Inc., 1st edition, 2013.
\newblock \href {https://doi.org/10.1142/8685} {\path{doi:10.1142/8685}}.

\bibitem{Bal97}
K.~Ball.
\newblock An elementary introduction to modern convex geometry.
\newblock In S.~Levy, editor, {\em Flavors of Geometry}, pages 1--58. Cambridge
  University Press, 1997.
\newblock (MSRI Publications, Vol.~31).

\bibitem{BaS99}
R.~G. Bartle and D.~R. Sherbert.
\newblock {\em Introduction to Real Analysis}.
\newblock Wiley, 4th edition, 2018.

\bibitem{bloomenthal1997introduction}
J.~Bloomenthal, C.~Bajaj, J.~Blinn, B.~Wyvill, M.-P. Cani, A.~Rockwood, and
  G.~Wyvill.
\newblock {\em Introduction to Implicit Surfaces}.
\newblock Morgan Kaufmann, 1997.

\bibitem{Brochu:2012:EGE}
T.~Brochu, E.~Edwards, and R.~Bridson.
\newblock Efficient geometrically exact continuous collision detection.
\newblock {\em ACM Trans. Graph.}, 31(4):96:1--96:7, 2012.
\newblock \href {https://doi.org/10.1145/2185520.2185592}
  {\path{doi:10.1145/2185520.2185592}}.

\bibitem{ChM96}
B.~Chazelle and J.~Matou\v{s}ek.
\newblock On linear-time deterministic algorithms for optimization problems in
  fixed dimension.
\newblock {\em J.\ Algorithms}, 21:579--597, 1996.
\newblock \href {https://doi.org/10.1006/jagm.1996.0060}
  {\path{doi:10.1006/jagm.1996.0060}}.

\bibitem{chibane2020neural}
J.~Chibane, A.~Mir, and G.~Pons-Moll.
\newblock Neural unsigned distance fields for implicit function learning.
\newblock In {\em Proc.\ 34th Internat.\ Conf.\ Neural Inf.\ Proc.\ Syst.},
  2020.
\newblock \href {https://doi.org/10.48550/arXiv.2010.13938}
  {\path{doi:10.48550/arXiv.2010.13938}}.

\bibitem{CKM99}
T.~Culver, J.~Keyser, and D.~Manocha.
\newblock Accurate computation of the medial axis of a polyhedron.
\newblock In {\em Proc.\ Fifth ACM Symp.\ Solid Modeling and Applications}, SMA
  '99, pages 179--190, 1999.
\newblock \href {https://doi.org/10.1145/304012.304030}
  {\path{doi:10.1145/304012.304030}}.

\bibitem{EFR12}
H.~Edelsbrunner, B.~T. Fasy, and G.~Rote.
\newblock Add isotropic {Gaussian} kernels at own risk: {More} and more
  resilient modes in higher dimensions.
\newblock In {\em Proc.\ 28th Annu.\ Sympos.\ Comput.\ Geom.}, pages 91--100,
  2012.
\newblock \href {https://doi.org/10.1145/2261250.2261265}
  {\path{doi:10.1145/2261250.2261265}}.

\bibitem{CompTopoBook}
H.~Edelsbrunner and J.~Harer.
\newblock {\em Computational topology: {An} introduction}.
\newblock American Mathematical Soc., 2010.

\bibitem{EE98}
D.~Eppstein and J.~Erickson.
\newblock Raising roofs, crashing cycles, and playing pool: {Applications} of a
  data structure for finding pairwise interactions.
\newblock In {\em Proc.\ 14th Annu.\ Sympos.\ Comput.\ Geom.}, pages 58--67,
  1998.
\newblock \href {https://doi.org/10.1145/276884.276891}
  {\path{doi:10.1145/276884.276891}}.

\bibitem{Fridovich-Keil_2022_CVPR}
S.~Fridovich-Keil, A.~Yu, M.~Tancik, Q.~Chen, B.~Recht, and A.~Kanazawa.
\newblock Plenoxels: Radiance fields without neural networks.
\newblock In {\em Proc.\ IEEE/CVF Conf.\ Comput.\ Vis.\ Patt.\ Recog.}, pages
  5501--5510, 2022.
\newblock \href {https://doi.org/10.1109/CVPR52688.2022.00542}
  {\path{doi:10.1109/CVPR52688.2022.00542}}.

\bibitem{GIM99}
A.~Gionis, P.~Indyk, and R.~Motwani.
\newblock Similarity search in high dimensions via hashing.
\newblock In {\em Proc.\ 25th Internat.\ Conf.\ Very Large Data Bases}, VLDB
  '99, pages 518--529, 1999.

\bibitem{gropp2020implicit}
A.~Gropp, L.~Yariv, N.~Haim, M.~Atzmon, and Y.~Lipman.
\newblock Implicit geometric regularization for learning shapes.
\newblock In {\em Internat.\ Conf.\ Mach.\ Learn.}, pages 3789--3799, 2020.

\bibitem{Har01}
S.~Har-Peled.
\newblock A replacement for {Voronoi} diagrams of near linear size.
\newblock In {\em Proc.\ 42nd Annu.\ IEEE Sympos.\ Found.\ Comput.\ Sci.},
  pages 94--103, 2001.
\newblock \href {https://doi.org/10.1109/SFCS.2001.959884}
  {\path{doi:10.1109/SFCS.2001.959884}}.

\bibitem{HIM12}
S.~Har-Peled, P.~Indyk, and R.~Motwani.
\newblock Approximate nearest neighbor: {Towards} removing the curse of
  dimensionality.
\newblock {\em Theo.\ of Comput.}, 8:321--350, 2012.
\newblock \href {https://doi.org/10.4086/toc.2012.v008a014}
  {\path{doi:10.4086/toc.2012.v008a014}}.

\bibitem{HaK15}
S.~Har-Peled and N.~Kumar.
\newblock Approximating minimization diagrams and generalized proximity search.
\newblock {\em SIAM J.\ Comput.}, 44:944--974, 2015.
\newblock \href {https://doi.org/10.1137/140959067}
  {\path{doi:10.1137/140959067}}.

\bibitem{HKM16}
S.~Har-Peled, N.~Kumar, D.~M. Mount, and B.~Raichel.
\newblock Space exploration via proximity search.
\newblock {\em Discrete Comput.\ Geom.}, 56:357--376, 2016.
\newblock \href {https://doi.org/10.1007/s00454-016-9801-7}
  {\path{doi:10.1007/s00454-016-9801-7}}.

\bibitem{KOPP2022115575}
P.~Kopp, E.~Rank, V.~M. Calo, and S.~Kollmannsberger.
\newblock Efficient multi-level $hp$-finite elements in arbitrary dimensions.
\newblock {\em Comput.\ Meth.\ Appl.\ Mech.\ Eng.}, 401, 2022.
\newblock \href {https://doi.org/10.1016/j.cma.2022.115575}
  {\path{doi:10.1016/j.cma.2022.115575}}.

\bibitem{LLM21}
J.~C.~H. Lee, J.~Li, C.~Musco, J.~M. Phillips, and W.~M. Tai.
\newblock Finding an approximate mode of a kernel density estimate.
\newblock In {\em Proc.\ 29th Annu.\ European Sympos.\ Algorithms}, pages
  61:1--61:19, 2021.
\newblock \href {https://doi.org/10.4230/LIPIcs.ESA.2021.61}
  {\path{doi:10.4230/LIPIcs.ESA.2021.61}}.

\bibitem{Lee2003}
J.~M. Lee.
\newblock {\em Introduction to Smooth Manifolds}.
\newblock Springer New York, 2003.
\newblock \href {https://doi.org/10.1007/978-1-4419-9982-5}
  {\path{doi:10.1007/978-1-4419-9982-5}}.

\bibitem{pmlr-v139-lipman21a}
Y.~Lipman.
\newblock Phase transitions, distance functions, and implicit neural
  representations.
\newblock In {\em Proc.\ 38th Internat.\ Conf.\ Mach.\ Learn.}, pages
  6702--6712, 2021.
\newblock URL: \url{https://proceedings.mlr.press/v139/lipman21a.html}.

\bibitem{lopez2013gap}
R.~Lopez-Padilla, R.~Murrieta-Cid, and S.~M. LaValle.
\newblock Optimal gap navigation for a disc robot.
\newblock In E.~Frazzoli, T.~Lozano-Perez, N.~Roy, and D.~Rus, editors, {\em
  Algorithmic Foundations of Robotics X}, pages 123--138, 2013.
\newblock \href {https://doi.org/https://doi.org/10.1007/978-3-642-36279-8_8}
  {\path{doi:https://doi.org/10.1007/978-3-642-36279-8_8}}.

\bibitem{pmlr-v206-marchetti23a}
G.~L. Marchetti, V.~Polianskii, A.~Varava, F.~T. Pokorny, and D.~Kragic.
\newblock An efficient and continuous {Voronoi} density estimator.
\newblock In {\em Proc.\ 26th Internat.\ Conf.\ Artif.\ Intel.\ Stat.}, pages
  4732--4744, 2023.
\newblock \href {https://doi.org/10.48550/arXiv.2210.03964}
  {\path{doi:10.48550/arXiv.2210.03964}}.

\bibitem{McM70}
P.~McMullen.
\newblock The maximum numbers of faces of a convex polytope.
\newblock {\em Mathematika}, 17:179--184, 1970.

\bibitem{Melenk1996}
J.~M. Melenk and I.~Babu{\v s}ka.
\newblock The partition of unity finite element method: {Basic} theory and
  applications.
\newblock {\em Comput.\ Methods Appl.\ Mech.\ Engrg.}, 139:289--314, 1996.
\newblock \href {https://doi.org/10.1016/S0045-7825(96)01087-0}
  {\path{doi:10.1016/S0045-7825(96)01087-0}}.

\bibitem{Ohtake:2003}
Y.~Ohtake, A.~Belyaev, M.~Alexa, G.~Turk, and H.-P. Seidel.
\newblock Multi-level partition of unity implicits.
\newblock {\em ACM Trans.\ Graph.}, 22:463--470, 2003.
\newblock \href {https://doi.org/10.1145/882262.882293}
  {\path{doi:10.1145/882262.882293}}.

\bibitem{Osher2003}
S.~Osher and R.~Fedkiw.
\newblock {\em Level Set Methods and Dynamic Implicit Surfaces}.
\newblock Springer, New York, 2003.
\newblock \href {https://doi.org/10.1007/b98879} {\path{doi:10.1007/b98879}}.

\bibitem{park2019deepsdf}
J.~J. Park, P.~Florence, J.~Straub, R.~Newcombe, and S.~Lovegrove.
\newblock {DeepSDF}: {Learning} continuous signed distance functions for shape
  representation.
\newblock In {\em Proc.\ IEEE/CVF Conf.\ Comput.\ Vis.\ Patt.\ Recog.}, pages
  165--174, 2019.
\newblock \href {https://doi.org/10.1109/CVPR.2019.00025}
  {\path{doi:10.1109/CVPR.2019.00025}}.

\bibitem{spelunking_TOG22}
N.~Sharp and A.~Jacobson.
\newblock Spelunking the deep: Guaranteed queries on general neural implicit
  surfaces via range analysis.
\newblock {\em ACM Trans. Graph.}, 41:1--16, 2022.
\newblock \href {https://doi.org/10.1145/3528223.3530155}
  {\path{doi:10.1145/3528223.3530155}}.

\bibitem{Showalter2011}
R.~E. Showalter.
\newblock {\em Hilbert space methods in partial differential equations}.
\newblock Dover Publications, 2011.
\newblock \href {https://doi.org/10.58997/ejde.mon.01}
  {\path{doi:10.58997/ejde.mon.01}}.

\bibitem{Stein1970}
E.~M. Stein.
\newblock {\em Singular Integrals and Differentiability Properties of
  Functions}.
\newblock Princeton Mathematical Series (PMS-30). Princeton University Press,
  1970.
\newblock URL: \url{https://www.jstor.org/stable/j.ctt1bpmb07}.

\bibitem{TDS16}
A.~Tagliasacchi, T.~Delame, M.~Spagnuolo, N.~Amenta, and A.~Telea.
\newblock {3D} skeletons: {A} state-of-the-art report.
\newblock {\em Computer Graphics Forum}, 35(2):573--597, 2016.
\newblock \href {https://doi.org/10.1111/cgf.12865}
  {\path{doi:10.1111/cgf.12865}}.

\bibitem{tiwari2022touch}
K.~Tiwari, B.~Sakcak, P.~Routray, Manivannan M., and S.~M. LaValle.
\newblock Visibility-inspired models of touch sensors for navigation.
\newblock In {\em IEEE/RSJ Internat.\ Conf.\ Intell.\ Robots and Systems},
  pages 13151--13158, 2022.
\newblock \href {https://doi.org/10.1109/iros47612.2022.9981084}
  {\path{doi:10.1109/iros47612.2022.9981084}}.

\bibitem{vaxman2016directional}
A.~Vaxman, M.~Campen, O.~Diamanti, D.~Panozzo, D.~Bommes, K.~Hildebrandt, and
  M.~Ben-Chen.
\newblock Directional field synthesis, design, and processing.
\newblock In {\em Computer Graphics Forum}, volume~35, pages 545--572, 2016.
\newblock \href {https://doi.org/10.1111/cgf.12864}
  {\path{doi:10.1111/cgf.12864}}.

\bibitem{WWL22}
V.~J. Wei, R.~C.-W. Wong, C.~Long, D.~M. Mount, and H.~Samet.
\newblock Proximity queries on terrain surface.
\newblock {\em ACM Trans.\ Database Syst.}, 47:1--59, 2022.
\newblock \href {https://doi.org/10.1145/3563773} {\path{doi:10.1145/3563773}}.

\bibitem{yershova2007improving}
A.~Yershova and S.~M. LaValle.
\newblock Improving motion-planning algorithms by efficient nearest-neighbor
  searching.
\newblock {\em IEEE Trans.\ Robotics}, 23(1):151--157, 2007.
\newblock \href {https://doi.org/10.1109/TRO.2006.886840}
  {\path{doi:10.1109/TRO.2006.886840}}.

\bibitem{ZANDER2022103700}
N.~Zander, H.~BÃ©riot, C.~Hoff, P.~Kodl, and L.~Demkowicz.
\newblock Anisotropic multi-level hp-refinement for quadrilateral and
  triangular meshes.
\newblock {\em Finite Elem.\ Anal.\ Design}, 203, 2022.
\newblock \href {https://doi.org/10.1016/j.finel.2021.103700}
  {\path{doi:10.1016/j.finel.2021.103700}}.

\bibitem{ZHANG20143}
X.~Zhang, Y.~J. Kim, and D.~Manocha.
\newblock Continuous penetration depth.
\newblock {\em Computer-Aided Design}, 46:3--13, 2014.
\newblock \href {https://doi.org/10.1016/j.cad.2013.08.013}
  {\path{doi:10.1016/j.cad.2013.08.013}}.

\end{thebibliography}

\end{document}